\documentclass[10pt]{article}
\usepackage[english]{babel}
\usepackage[T1]{fontenc} 
\usepackage[utf8]{inputenc}

\usepackage{bbm,amsmath, amsfonts, amsthm, amssymb}
\usepackage{mathtools}
\usepackage{hyperref}
\usepackage{xcolor}
\usepackage{graphicx}
\usepackage{braket}
\usepackage{tikz-cd}
\usepackage{enumerate}
\usepackage[ruled, vlined]{algorithm2e}
\usepackage{listings}
\usepackage{xcolor}
\usepackage{tikz}
\usetikzlibrary{quantikz2}
\usepackage[textsize=tiny]{todonotes}
\usepackage{bbm}
\usepackage{bm}
\usepackage{shuffle}

\graphicspath{{Figures/}}

\definecolor{codegreen}{rgb}{0,0.6,0}
\definecolor{codegray}{rgb}{0.5,0.5,0.5}
\definecolor{codepurple}{rgb}{0.58,0,0.82}
\definecolor{backcolour}{rgb}{0.95,0.95,0.92}

\lstdefinestyle{mystyle}{
    backgroundcolor=\color{backcolour},   
    commentstyle=\color{codegreen},
    keywordstyle=\color{magenta},
    numberstyle=\tiny\color{codegray},
    stringstyle=\color{codepurple},
    basicstyle=\ttfamily\footnotesize,
    breakatwhitespace=false,         
    breaklines=true,                 
    captionpos=b,                    
    keepspaces=true,                 
    numbers=left,                    
    numbersep=5pt,                  
    showspaces=false,                
    showstringspaces=false,
    showtabs=false,                  
    tabsize=2
}
\numberwithin{equation}{section}
\newtheorem{theorem}{Theorem}[section]

\newtheorem{lemma}[theorem]{Lemma}
\newtheorem{proposition}[theorem]{Proposition}

\theoremstyle{definition}
\newtheorem{definition}[theorem]{Definition}
\newtheorem{remark}[theorem]{Remark}

\newtheorem{example}[theorem]{Example}

\usepackage{amsmath}

\newcommand{\bluebf}[1]{\textcolor{blue}{\ensuremath{\mathbf{#1}}}}

\newcommand{\eps}{\varepsilon}

\newcommand{\NN}{\mathbb{N}}
\newcommand{\RR}{\mathbb{R}}
\newcommand{\PP}{\mathbb{P}}

\newcommand{\D}{\mathrm{d}}

\newcommand{\Ff}{\mathcal{F}}

\newcommand{\EE}{\mathbb{E}}

\usepackage{mathrsfs}
\usepackage[margin=2cm]{geometry}

\usepackage[symbol]{footmisc} 
\usepackage[round]{natbib}

\title{Martingality in EFM Signature Volatility Models}

\author{
Dimitri Sotnikov$^{1,2}$\thanks{\texttt{dmitrii.sotnikov@polytechnique.edu}. The author gratefully acknowledges financial support from Engie Global Markets.}
\and
Adeline Viot$^{3}$\thanks{\texttt{adeline.viot@ens-paris-saclay.fr}\\ We would like to thank Eduardo Abi Jaber for enriching discussions and valuable feedback on the article.}
}

\date{\today}

\begin{document}

\maketitle

\begin{center}
$^1$CMAP, École Polytechnique\\
$^2$Engie Global Markets\\
$^3$Department of Mathematics, ENS Paris-Saclay
\end{center}

\begin{abstract}
We establish a martingality criterion for the Exponentially Fading Memory (EFM) signature volatility model, extending the recent result of \cite*{abijaber2025martingale}. In the proposed framework, the volatility process is given by a linear functional of the EFM signature of a time-augmented Brownian motion, which yields a globally well-behaved, fully non-Markovian volatility model admitting a stationary distribution. The stationary EFM signature volatility model is also covered by allowing for a random initial value of the signature. Moreover, the volatility process can be adjusted to the initial forward variance curve. For this model, excluding trivial cases, we prove that the price process is a true martingale if and only if the truncation order is odd and a sign condition involving the correlation parameter holds.

\end{abstract}

\medskip
\noindent\textbf{Keywords:} EFM signature, stochastic volatility, martingality, Ornstein--Uhlenbeck process.

\smallskip
\noindent\textbf{MSC 2020:} 60H10, 60G44, 91G30.

\section{Introduction}

Distinguishing between strict local martingales and true martingales is a theoretical question with important practical consequences for stochastic volatility models, in which the local martingale under consideration is the asset price. Indeed, this question must be addressed to justify the viability of a stochastic volatility model for pricing \citep{sepp2023robust_SV}; otherwise, one risks mispricing futures contracts, violating put--call parity and observing unexpected asymptotic behaviour of implied volatility smiles \citep*{cox2005local, heston2007options, jacquier2015impliedvolatilitystrictlocal}. 

Since the classical criterion of \citet{Novikov1973} is rarely useful for establishing true martingality in stochastic volatility models, different techniques have been developed, first in Markovian frameworks \citep{Lions2007, Mijatovic2012, bernard2017martingale}, where the volatility dynamics follow an SDE, such as the Hull--White model \citep{Sin98}, the SABR model \citep{Jourdain2004} or the Heston model \citep{Andersen2007}. These techniques were subsequently extended to non-Markovian models, including the rough Bergomi model \citep{Gassiat2018OnTM}, the Volterra Heston model \citep*{jaber2019affinevolterraprocesses}, the path-dependent volatility model \citep{Nutz2024}, the signature volatility model \citep*{abijaber2025martingale}, as well as generic non-Markovian frameworks \citep{Ruf_2015, Criens2018, Criens2020NoAI}.

We are specifically interested in signature volatility models, pioneered by \citet*{arribas2020sig} and subsequently studied by \citet*{cuchiero2022theocalib, cuchiero2023spvix} and \citet{abijaber2024signature}. In this class of models, the volatility process is represented as a linear function of the signature of the underlying stochastic factor, the linear form being justified by the universal approximation property of signatures or by the possibility of representing the volatility process exactly as such a linear function. In practice, only finite linear combinations of signature elements (up to order $N$) can be considered, and their coefficients must be chosen so as to ensure the martingality of the price process. For the signature of the time-augmented Brownian motion, and in the presence of a nonzero dominant coefficient in front of $W_t^N / N!$, \citet*{abijaber2025martingale} showed that the price process is a true martingale if and only if $N$ is odd and the product of the dominant coefficient with the correlation parameter is nonpositive.

In this article, we aim to extend this result in three directions. The first, and most important, concerns the stationarity (or the existence of a stationary limit) of the volatility process. Indeed, finite linear functions of the classical signature do not, in general, yield a nontrivial stationary process: even when the signature is computed for a stationary process, as proposed by \citet*{cuchiero2023spvix}, the resulting volatility does not admit a stationary limit and exhibits highly unrealistic long-term behaviour. This was observed in \cite{abijaberEFM} and led to the introduction of a ``stationarised'' signature, called the Exponentially Fading Memory (EFM) signature. We therefore consider the EFM signature volatility model, in which the volatility is given by a linear combination of the elements of the EFM signature of the Brownian motion.

The second extension consists in allowing for a random initial volatility $\Sigma_0$ by taking a random initial value of the signature. First, this allows the volatility process to be initialised with its stationary distribution, thereby ensuring its global stationarity. Second, a random initial volatility may be beneficial for calibration, as observed in the Markovian case by \citet*{mechkov2015hot, Jacquier2019, lemaire2022stationary}.

Finally, we allow the model to take into account the initial forward variance curve, as is done in the Bergomi model and in the more recent Quintic model of \citet*{quintic2022}, which is a particular case of the proposed framework.

The model thus takes the following form: given two Brownian motions $(W, B)$ with correlation parameter $\rho \in [-1, 1]$ and a \( d \)-dimensional Brownian motion \( Z \) independent of \( (W, B) \), define
\begin{equation}\label{syst intro}
dS_t=S_t\Sigma_t\,dB_t,\qquad \Sigma_t=g(t)\braket{\bm{\sigma},{\bf D}^{\lambda}_t\bm\Xi \otimes\widehat{\mathbb{Y}}^{\lambda}_{0,t}}, \quad t \in [0, T],
\end{equation}
where $\widehat{\mathbb{Y}}^{\lambda}_{0,t}$ denotes the EFM signature of the time-augmented Brownian motion $\widehat{Y}_t = (t, W_{t}, Z_{t}^1, \ldots, Z_{t}^d)$ over $[0, t]$ with decay parameter $\lambda$ (see Definition~\ref{def:efm-def} below), $\bm\Xi$ stands for the random initial value of the EFM signature, ${\bf D}^{\lambda}_t$ is a discounting operator arising from the algebraic structure of the EFM signature, $\bm\sigma \in T^N(\RR^{2+d})$ specifies the coefficients of the linear function of degree $N \in \NN$, and $g$ is a positive function such that $0 < g_* \leq g(t) \leq g^*$ for all $t \in [0, T]$.

In our main result, we prove that the martingality criterion identified in \cite*[Theorems~3.1 and 3.5]{abijaber2025martingale} also holds for the EFM signature volatility model introduced above: namely, in the nontrivial case $N \geq 2$ and $\rho \neq 0$,
\begin{center}
    \textit{$S$ is a true martingale if and only if $N$ is odd and $\rho \bm{\sigma}^{\bluebf{2}^{\otimes N}} < 0$.}
\end{center}

An odd truncation order of the linear combination, which plays a role analogous to the degree of a polynomial, is a characteristic feature of this class of models. For polynomial volatility models, a condition of this type already appears in \cite*[Proposition~A.1]{quintic2022}, where an odd degree, together with a sign condition on the correlation, is shown to be sufficient for the martingality of the Quintic model.
 Our result extends this in two directions: it provides a full characterisation for polynomials of arbitrary degree, and it generalises the result to the non-Markovian setting of EFM signatures.

The proof relies first on the classical change-of-measure argument of \citet{Sin98}, which reduces the question of true martingality to the study of the explosion of an SDE involving the EFM signature. However, the proofs of \cite*{abijaber2025martingale} do not apply directly in our case, since the algebraic structure of the EFM signature differs from that of the classical signature. We therefore introduce an auxiliary signature-SDE, which is more natural in the EFM context and has the same explosion time as the signature-SDE of \cite*[Theorem~4.2]{abijaber2025martingale}. We then adapt the signature bounds of \cite*[Lemmas~5.1--5.2]{abijaber2025martingale} to the EFM setting in order to prove that the solution to this auxiliary SDE does not explode when $N$ is odd and $\rho \bm{\sigma}^{\bluebf{2}^{\otimes N}} < 0$, and that it explodes on $[0, T]$ with positive probability otherwise.

The paper is organised as follows. Section~\ref{sec : classical model} recalls the definition of the EFM signature and the necessary preliminaries. Section~\ref{subsec : model efm} presents the associated stochastic volatility model, states the martingality criterion and shows how the problem reduces to the study of the explosion of an EFM-sig-SDE. Section~\ref{sec : efm proofs} is devoted to the proof of the main result. 

\section{Preliminaries on the EFM Signature}\label{sec : classical model}

This section provides the necessary preliminaries and recalls the notion of the EFM signature, together with the properties needed for our analysis.

\subsection{Tensor Algebra}\label{subsec : tensor}
We fix $d\in \NN$. For any $n\in\NN^*$, $(\RR^{{ d}})^{\otimes n}$ denotes the space of tensors of order $n$, and $(\RR^d)^{\otimes 0}:=\RR$. We then define the extended tensor algebra over $\RR^d$ as
    $$T((\RR^d)):=\{\bm{\ell}=(\bm{\ell}^n)^{\infty}_{n=0} : \bm{\ell}^n\in(\RR^d)^{\otimes n} \text{ for all } n\in\NN \}.$$
The tensor algebra over $\RR^d$ is defined as the set of tensor sequences for which there exists an order $N$ such that every component of order strictly greater than $N$ is zero. We denote the tensor algebra by $T(\RR^d)=\bigcup_{N\in\NN} T^N(\RR^d)$, where
    $T^N(\RR^d):=\{\bm{\ell}\in T((\RR^d)) : \bm{\ell}^n=0 \text{ for all } n > N \}.$

Let $\{e_1,\dots,e_d\}$ denote the canonical basis of $\RR^d$. Then, for all $n\in\NN^*$, $\{e_{i_1}\otimes \dots \otimes e_{i_n}\}_{i_1,\dots,i_n\in\{1,\dots,d\}}$ forms a basis of $(\RR^d)^{\otimes n}$. We identify the vector $e_{i_1}\otimes \dots \otimes e_{i_n}$ with the word \bluebf{i_1\dots i_n} over the alphabet $A_d=\{\bluebf{1},\dots,\bluebf{d}\}$. 

The set of words of length $n$ is denoted by
    $V_n:=\{\bluebf{i_1\dots i_n}: \bluebf{i_k}\in A_d \text{ for all } k\in\{1,\dots,n\}\}.$
We denote by \bluebf{\varnothing} the empty word and set $V_0=\{\bluebf{\varnothing}\}$. We identify the empty word with the basis element of $\RR$. Thus, $V=\bigcup_{n\geq 0} V_n$ can be identified with a basis of $T((\RR^d))$, and any $\bm{\ell}\in T((\RR^d))$ can be decomposed as
    $$\bm{\ell}=\sum_{n=0}^\infty \sum_{\bluebf{v}\in V_n} \bm{\ell}^{\bluebf{v}} \bluebf{v}.$$

We denote by $\bm{\ell}\bluebf{v}$ the tensor sequence $\bm{\ell}\otimes e_{i_1}\otimes \dots \otimes e_{i_n}$, where $\bm{\ell}\in T((\RR^d))$ and $\bluebf{v}=\bluebf{i_1\dots i_n}$. 

For all $\bm{p}, \bm{q}\in T((\RR^d))$, we define their tensor product by
    $$\bm{p}\otimes\bm{q}=(\bm r^0,\bm r^1,\dots),\quad \bm r^n=\sum_{k=0}^n \bm p^k\otimes \bm q^{n-k}, \quad n\in \NN.$$

We define the bracket between $\bm{\ell}\in T(\RR^d)$ and $\bm{p}\in T((\RR^d))$ by
    $$\braket{\bm{\ell},\bm{p}}:=\sum_{n\geq 0}\sum_{\bluebf{v}\in V_n} \bm{\ell}^{\bluebf{v}} \bm{p}^{\bluebf{v}}.$$

\begin{definition}[\textit{Shuffle product}]
For two words $\bluebf{v}, \bluebf{w}$ and two letters $\bluebf{i}, \bluebf{j}\in A_d$, the \textit{shuffle product} is the map $\shuffle : V\times V \to T(\RR^d)$ defined recursively by
\begin{align*}
    (\bluebf{vi})\shuffle (\bluebf{wj})&=(\bluebf{v}\shuffle (\bluebf{wj}))\bluebf{i}+((\bluebf{vi})\shuffle \bluebf{w})\bluebf{j},\\
    \bluebf{v}\shuffle \bluebf{\varnothing}&=\bluebf{\varnothing}\shuffle\bluebf{v}=\bluebf{v}.
\end{align*}
The shuffle product is then extended to $T(\RR^d)$ by bilinearity. 
\end{definition}

\subsection{EFM Signatures}\label{subsec : efm signa}

In this subsection, we recall the definition of the exponentially fading memory (EFM) signature, introduced in~\cite{abijaberEFM}.

For two continuous semimartingales $X$ and $Y$, we denote the Itô integral by $\int_0^\cdot Y_t dX_t$ and the Stratonovich integral by $\int_0^\cdot Y_t \circ dX_t = \int_0^\cdot Y_t dX_t  + \frac{1}{2} [X,Y]_\cdot.$
We fix a time horizon $T>0$ and work on a filtered probability space $(\Omega,\Ff,(\Ff_t)_{t\in [0, T]},\PP)$ on which a continuous $\RR^d$-valued semimartingale $(X_t)_{t\in [0, T]}$ is defined. We also fix $\lambda =(\lambda_i)_{i=1, \ldots, d}\in\RR^d$ such that $\lambda_i \geq 0$ for all $i =1, \ldots, d$. We set $\Delta:=\{(s,t)\in [0, T]^2 : s\leq t\}$.

\begin{definition}\label{def:efm-def}
The $\lambda$-EFM signature of $X$ is defined by 
$$\mathbb{X}^\lambda\colon(s,t,\omega)\in \Delta\times\Omega\mapsto\mathbb{X}^\lambda_{s,t}=(1,\mathbb{X}_{s,t}^{\lambda,1}(\omega),\mathbb{X}_{s,t}^{\lambda,2}(\omega),\dots)\in  T((\RR^d)),$$
where, for any $\bluebf{v}=\bluebf{i_1\dots i_n}\in V$,
$$\mathbb{X}_{s,t}^{\lambda,\bluebf{i_1\dots i_n}}:=\int_s^t e^{-\lambda_{\bluebf{i_1\dots i_n}}(t-u)}\mathbb{X}_{s,u}^{\lambda,\bluebf{i_1\dots i_{n-1}}}\circ dX_u^{i_n},\qquad \lambda_{\bluebf{i_1\dots i_n}}:=\sum_{k=1}^n\lambda_{i_k},$$
with $\lambda_{\bluebf{\varnothing}}=0$ and $\lambda_{\bluebf{uv}}=\lambda_{\bluebf{u}}+\lambda_{\bluebf{v}}$ for all $\bluebf{u},\bluebf{v}\in V$.
\end{definition}

For \( s \in \mathbb{R} \), let ${\bf\Lambda},{\bf D}^{\lambda}_s:T((\mathbb{R}^d))\to T((\mathbb{R}^d))$ be the operators defined, for all $\bm\ell=\sum_{n=0}^\infty\sum_{\bluebf{v}\in V_n}\bm\ell^{\bluebf{v}}\bluebf{v}\in T((\mathbb{R}^d))$, by
    $$
    {\bf\Lambda}\bm\ell=\sum_{n=0}^\infty\sum_{\bluebf{v}\in V_n}\lambda_{\bluebf{v}}\bm\ell^{\bluebf{v}}\bluebf{v},
    \qquad
    {\bf D}^{\lambda}_s\bm\ell=\sum_{n=0}^\infty\sum_{\bluebf{v}\in V_n}e^{-\lambda_{\bluebf{v}}s}\bm\ell^{\bluebf{v}}\bluebf{v}.
    $$
Then, according to~\cite[Proposition 3.6]{abijaberEFM}, the definition of the EFM signature can be summarised by the following stochastic differential equation:
    $$
    d\mathbb{X}_{s,t}^{\lambda}
    =-{\bf\Lambda}\mathbb{X}_{s,t}^{\lambda}\,dt
    +\mathbb{X}_{s,t}^{\lambda}\otimes\circ dX_t,
    \quad t \in [s, T].
    $$

The following result, taken from~\cite[Proposition 4.1]{abijaberEFM}, is the EFM counterpart of Chen's relation for classical signatures; see~\cite[Theorem 1.22]{chevyrev2022primer}.

\begin{proposition}\label{prop : chen EFM}
    Let $X=(X_t)_{t\in[0,T]}$ be a continuous $\RR^d$-valued semimartingale. Then, 
    \begin{equation}\label{eq:Chen_EFM}
        \mathbb{X}^{\lambda}_{s,t}=({\bf D}^{\lambda}_{t-u}\mathbb{X}^{\lambda}_{s,u})\otimes \mathbb{X}^{\lambda}_{u,t}, \quad 0\leq s\leq u\leq t\leq T.
    \end{equation}
\end{proposition}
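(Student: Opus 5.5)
We prove the EFM Chen relation \eqref{eq:Chen_EFM} componentwise, by induction on the length of the word, using only Definition~\ref{def:efm-def}. For $\bluebf{v}=\bluebf{i_1\dots i_n}$, the $\bluebf{v}$-component of the right-hand side is
$$
\sum_{k=0}^{n} e^{-\lambda_{\bluebf{i_1\dots i_k}}(t-u)}\,\mathbb{X}^{\lambda,\bluebf{i_1\dots i_k}}_{s,u}\,\mathbb{X}^{\lambda,\bluebf{i_{k+1}\dots i_n}}_{u,t}.
$$
We must show that this equals $\mathbb{X}^{\lambda,\bluebf{v}}_{s,t}$ for all $t\in[u,T]$, with $s\le u$ fixed. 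For $n=0$ both sides are $1$. Assume the identity holds for all words of length $n-1$, in particular for $\bluebf{v}'=\bluebf{i_1\dots i_{n-1}}$ and every time $r\in[u,T]$.

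Split the defining integral of $\mathbb{X}^{\lambda,\bluebf{v}}_{s,t}$ at $u$:
$$
\mathbb{X}^{\lambda,\bluebf{v}}_{s,t}=e^{-\lambda_{\bluebf{v}}(t-u)}\int_s^u e^{-\lambda_{\bluebf{v}}(u-r)}\mathbb{X}^{\lambda,\bluebf{v}'}_{s,r}\circ dX^{i_n}_r+\int_u^t e^{-\lambda_{\bluebf{v}}(t-r)}\mathbb{X}^{\lambda,\bluebf{v}'}_{s,r}\circ dX^{i_n}_r .
$$
The first term is $e^{-\lambda_{\bluebf{v}}(t-u)}\mathbb{X}^{\lambda,\bluebf{v}}_{s,u}$, which is the $k=n$ term of the sum (here $\mathbb{X}^{\lambda,\bluebf{\varnothing}}_{u,t}=1$). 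Pulling the deterministic exponential out of a Stratonovich integral over $[s,u]$ is harmless, since it is a constant factor there.

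In the second term, substitute the induction hypothesis for $\mathbb{X}^{\lambda,\bluebf{v}'}_{s,r}$ with $r\ge u$:
$$
\mathbb{X}^{\lambda,\bluebf{v}'}_{s,r}=\sum_{k=0}^{n-1}e^{-\lambda_{\bluebf{i_1\dots i_k}}(r-u)}\mathbb{X}^{\lambda,\bluebf{i_1\dots i_k}}_{s,u}\mathbb{X}^{\lambda,\bluebf{i_{k+1}\dots i_{n-1}}}_{u,r}.
$$
Each $\mathbb{X}^{\lambda,\bluebf{i_1\dots i_k}}_{s,u}$ is $\mathcal{F}_u$-measurable and does not depend on $r$, so it factors out of the integral over $[u,t]$. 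This requires a short justification: multiplying an integrand by an $\mathcal{F}_u$-measurable random variable commutes with Itô integration on $[u,t]$, and the bracket correction is linear as well. Then use the exponent identity
$$
\lambda_{\bluebf{v}}(t-r)+\lambda_{\bluebf{i_1\dots i_k}}(r-u)=\lambda_{\bluebf{i_1\dots i_k}}(t-u)+\lambda_{\bluebf{i_{k+1}\dots i_n}}(t-r),
$$
which follows from the additivity $\lambda_{\bluebf{v}}=\lambda_{\bluebf{i_1\dots i_k}}+\lambda_{\bluebf{i_{k+1}\dots i_n}}$. The $k$-th summand then becomes
$$
e^{-\lambda_{\bluebf{i_1\dots i_k}}(t-u)}\mathbb{X}^{\lambda,\bluebf{i_1\dots i_k}}_{s,u}\int_u^t e^{-\lambda_{\bluebf{i_{k+1}\dots i_n}}(t-r)}\mathbb{X}^{\lambda,\bluebf{i_{k+1}\dots i_{n-1}}}_{u,r}\circ dX^{i_n}_r .
$$
By Definition~\ref{def:efm-def}, the remaining integral is exactly $\mathbb{X}^{\lambda,\bluebf{i_{k+1}\dots i_n}}_{u,t}$. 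Summing over $k=0,\dots,n-1$ and adding the $k=n$ term gives the claim.

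The only delicate point is the manipulation of the Stratonovich integrals: factoring out $\mathcal{F}_u$-measurable random variables and the deterministic factor $e^{-\lambda(t-u)}$. To handle it, convert to Itô form. The exponential weights are deterministic $C^1$ functions of the integration variable, and since $t$ is fixed inside each integral they contribute no extra bracket terms. With that settled, the rest is bookkeeping of exponents.
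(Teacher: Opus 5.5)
Your proof is correct and complete. This paper does not actually prove the proposition---it is quoted from \cite[Proposition~4.1]{abijaberEFM}---so there is no in-paper argument to match, and your induction on word length is a self-contained derivation from Definition~\ref{def:efm-def} alone. The splitting at $u$, the exponent identity coming from $\lambda_{\bluebf{v}}=\lambda_{\bluebf{i_1\dots i_k}}+\lambda_{\bluebf{i_{k+1}\dots i_n}}$, and the extraction of the $\mathcal{F}_u$-measurable factors from the Stratonovich integral over $[u,t]$ are all justified as you indicate. One detail is worth making explicit. Substituting the induction hypothesis under $\int_u^t\cdots\circ dX^{i_n}_r$ needs it as an identity of processes on $[u,T]$, not just at each fixed $r$. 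This follows from the almost-sure identity at each fixed $r$ once you work with the continuous versions $t\mapsto e^{-\lambda_{\bluebf v}t}\int_s^t e^{\lambda_{\bluebf v}r}\,\mathbb{X}^{\lambda,\bluebf{i_1\dots i_{n-1}}}_{s,r}\circ dX^{i_n}_r$.

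An alternative route uses the SDE $d\mathbb{X}^{\lambda}_{s,t}=-{\bf\Lambda}\mathbb{X}^{\lambda}_{s,t}\,dt+\mathbb{X}^{\lambda}_{s,t}\otimes\circ dX_t$ recalled in the paper. Set $\mathbb{Z}_t:=({\bf D}^{\lambda}_{t-u}\mathbb{X}^{\lambda}_{s,u})\otimes\mathbb{X}^{\lambda}_{u,t}$ for $t\in[u,T]$. Since ${\bf\Lambda}$ is a derivation for $\otimes$ (the tensor form of the additivity behind your exponent identity), the product rule gives $d\mathbb{Z}_t=-{\bf\Lambda}\mathbb{Z}_t\,dt+\mathbb{Z}_t\otimes\circ dX_t$ with $\mathbb{Z}_u=\mathbb{X}^{\lambda}_{s,u}$. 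Uniqueness for this triangular linear system then yields \eqref{eq:Chen_EFM}. That route is shorter and coordinate-free, but it presupposes the SDE characterisation and a uniqueness argument; level by level, that uniqueness is variation of constants, i.e.\ your computation in integral form. Your route needs nothing beyond the definition and elementary properties of stochastic integrals.
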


We recall the fundamental linearisation property linking the {shuffle product} to path signatures; see~\cite[Proposition 2.2]{gaines1994algebra} and~\cite[Proposition 4.8]{abijaberEFM}.

\begin{proposition}\label{prop shuffle signa}
For all $\bm{\ell_1},\bm{\ell_2}\in T(\RR^{d})$, we have
    $$\braket{\bm{\ell_1},\mathbb{X}^{\lambda}_{s,t}}\braket{\bm{\ell_2},\mathbb{X}^{\lambda}_{s,t}} = \braket{\bm{\ell_1\shuffle \bm{\ell_2}},\mathbb{X}^{\lambda}_{s,t}}.$$
\end{proposition}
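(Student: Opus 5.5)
The plan is to argue by induction on the total length of the two linear forms, using the recursive structure of the shuffle product together with the integral definition of the EFM signature (Definition~\ref{def:efm-def}). By bilinearity of both sides in $(\bm{\ell_1},\bm{\ell_2})$, it suffices to treat words $\bm{\ell_1}=\bluebf{v}$, $\bm{\ell_2}=\bluebf{w}$. The claim then reads
$$\mathbb{X}^{\lambda,\bluebf{v}}_{s,t}\,\mathbb{X}^{\lambda,\bluebf{w}}_{s,t}=\braket{\bluebf{v}\shuffle\bluebf{w},\mathbb{X}^\lambda_{s,t}}.$$
If either word is empty, this is trivial since $\mathbb{X}^{\lambda,\bluebf{\varnothing}}=1$. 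Otherwise write $\bluebf{v}=\bluebf{v'i}$ and $\bluebf{w}=\bluebf{w'j}$, and assume the identity for all pairs of shorter total length.

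The key step is an integrating-factor reformulation that removes the exponential kernel. For any word $\bluebf{u}$, set $\tilde{\mathbb{X}}^{\bluebf{u}}_{t}:=e^{\lambda_{\bluebf{u}}(t-s)}\mathbb{X}^{\lambda,\bluebf{u}}_{s,t}$. Definition~\ref{def:efm-def} gives
$$\tilde{\mathbb{X}}^{\bluebf{u'k}}_t=\int_s^t e^{\lambda_{\bluebf{k}}(r-s)}\,\tilde{\mathbb{X}}^{\bluebf{u'}}_r\circ dX^k_r .$$
So $\tilde{\mathbb{X}}^{\bluebf{v}}$ and $\tilde{\mathbb{X}}^{\bluebf{w}}$ are Stratonovich integrals, and the Stratonovich product rule (no correction term) yields
$$d(\tilde{\mathbb{X}}^{\bluebf{v}}\tilde{\mathbb{X}}^{\bluebf{w}})=\tilde{\mathbb{X}}^{\bluebf{v}}e^{\lambda_{\bluebf{j}}(r-s)}\tilde{\mathbb{X}}^{\bluebf{w'}}\circ dX^j+\tilde{\mathbb{X}}^{\bluebf{w}}e^{\lambda_{\bluebf{i}}(r-s)}\tilde{\mathbb{X}}^{\bluebf{v'}}\circ dX^i .$$
Undoing the weights, the first integrand equals $e^{(\lambda_{\bluebf{v}}+\lambda_{\bluebf{w}})(r-s)}\,\mathbb{X}^{\lambda,\bluebf{v}}_{s,r}\mathbb{X}^{\lambda,\bluebf{w'}}_{s,r}$ times $\circ dX^j$, and the second is analogous. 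By the induction hypothesis, $\mathbb{X}^{\lambda,\bluebf{v}}\mathbb{X}^{\lambda,\bluebf{w'}}=\braket{\bluebf{v}\shuffle\bluebf{w'},\mathbb{X}^\lambda_{s,r}}$. Moreover, $\bluebf{v}\shuffle\bluebf{w'}$ is a combination of words $\bluebf{u}$ all having $\lambda_{\bluebf{u}}=\lambda_{\bluebf{v}}+\lambda_{\bluebf{w'}}$, because $\lambda$ is additive and shuffling preserves the multiset of letters. Multiplying back by $e^{-(\lambda_{\bluebf{v}}+\lambda_{\bluebf{w}})(t-s)}$ therefore gives, word by word,
$$\int_s^t e^{-\lambda_{\bluebf{uj}}(t-r)}\mathbb{X}^{\lambda,\bluebf{u}}_{s,r}\circ dX^j_r=\mathbb{X}^{\lambda,\bluebf{uj}}_{s,t}.$$
Summing over the words of $\bluebf{v}\shuffle\bluebf{w'}$ produces $\braket{(\bluebf{v}\shuffle\bluebf{w'})\bluebf{j},\mathbb{X}^\lambda_{s,t}}$, and the other term gives $\braket{(\bluebf{v'}\shuffle\bluebf{w})\bluebf{i},\mathbb{X}^\lambda_{s,t}}$. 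The recursive definition of $\shuffle$ then closes the induction.

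The main obstacle is the bookkeeping of the exponential weights. It is exactly the homogeneity of $\lambda_{\bluebf{u}}$ across all words in a shuffle that lets the discount factors factor out uniformly, and I would verify this as a separate one-line lemma, by induction from the recursive definition of $\shuffle$, before the main argument. One should also check that the Stratonovich product rule applies: $\tilde{\mathbb{X}}^{\bluebf{u}}$ are continuous semimartingales, and the deterministic $C^1$ weight $e^{\lambda_{\bluebf{k}}(r-s)}$ does not alter Stratonovich calculus.
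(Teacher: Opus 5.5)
Your proof is correct. The paper gives no argument of its own for this statement; it only cites the classical shuffle identity \cite[Proposition~2.2]{gaines1994algebra} and its EFM version \cite[Proposition~4.8]{abijaberEFM}. What you give is a self-contained proof. It is the standard inductive proof of the classical shuffle identity, run with the exponential weights carried along: the integrating factor removes the $t$-dependence of the kernel, and the $\lambda$-homogeneity of the words in $\bluebf{v}\shuffle\bluebf{w'}$ lets you reabsorb the weights.

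Your reformulation in fact already contains a shorter route. Set $\check X^k_r:=\int_s^r e^{\lambda_k(q-s)}\circ dX^k_q$. Then your recursion reads $\tilde{\mathbb{X}}^{\bluebf{u'k}}_t=\int_s^t\tilde{\mathbb{X}}^{\bluebf{u'}}_r\circ d\check X^k_r$ with $\tilde{\mathbb{X}}^{\bluebf{\varnothing}}=1$. So $\tilde{\mathbb{X}}_t$ is exactly the classical signature of the ($t$-independent) semimartingale $\check X$ over $[s,t]$, and
\[
\mathbb{X}^\lambda_{s,t}={\bf D}^\lambda_{t-s}\,\tilde{\mathbb{X}}_t .
\]
Your homogeneity lemma is equivalent to ${\bf D}^\lambda_u(\bluebf{v}\shuffle\bluebf{w})=({\bf D}^\lambda_u\bluebf{v})\shuffle({\bf D}^\lambda_u\bluebf{w})$. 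Since ${\bf D}^\lambda_u$ is also self-adjoint for the bracket, it maps group-like elements to group-like elements. The claim then follows in one line from the classical shuffle identity for $\tilde{\mathbb{X}}_t$, with no second induction.

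The reduction is shorter and isolates the only EFM-specific ingredient, namely that shuffles are $\lambda$-homogeneous. Your direct induction is independent of the classical theorem for semimartingale signatures, at the cost of redoing that induction with weights.
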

An element $\mathbb{X}\in T((\RR^d))$ with \( \mathbb{X}^{\bluebf{\varnothing}} = 1 \) is called {group-like}
if $\braket{\bm{\ell_1},\mathbb{X}}\braket{\bm{\ell_2},\mathbb{X}}=\braket{\bm{\ell_1}\shuffle\bm{\ell_2},\mathbb{X}}$ for all $\bm{\ell_1},\bm{\ell_2}\in T(\RR^d)$. Proposition~\ref{prop shuffle signa} thus states that EFM signatures are group-like.

\section{EFM Signature Volatility Model and Main Result}\label{subsec : model efm}

Fix \( T > 0 \) and let $(\Omega,\Ff,(\Ff_t)_{t\in[0,T]},\PP)$ be a filtered probability space supporting two one-dimensional Brownian motions $B$ and $W$ with correlation $\rho\in[-1,1]$, and a $d$-dimensional Brownian motion $Z$ independent of $(B, W)$. We assume that, under the risk-neutral measure $\PP$, the price process $S$ is given by the stochastic volatility model
    \begin{equation}\label{syst1 signa EFM}
    dS_t=S_t\Sigma_t\,dB_t,\qquad \Sigma_t=g(t)\braket{\bm{\sigma}, {\bf D}^{\lambda}_t\bm\Xi \otimes \widehat{\mathbb{Y}}^{\lambda}_{0,t}}, \qquad t \in [0, T],
    \end{equation}
where
$\widehat{\mathbb{Y}}^{\lambda}_{0,t}$ is the EFM signature of the process $\widehat{Y}\colon t \mapsto (t,W_t, Z_{t}^1, \ldots, Z_{t}^d)$ with parameter \( \lambda \in \mathbb{R}^{2+d}_{+} \) over $[0,t]$, $\bm{\sigma} \in T^N(\mathbb{R}^{2+d})$ for some \( N \in \mathbb{N} \), $\bm\Xi$ is an $\mathcal{F}_0$-measurable
$T((\mathbb{R}^{2+d}))$-valued group-like random variable such that $\bm\Xi ^{\bluebf{v}}$ has finite moments of every order for all \( \bluebf{v} \in V \), and $g$ is a positive function such that \( g_{*} \leq g(t) \leq g^*\) for all \( t \in [0, T] \), for some constants \( g_{*}, g^* > 0 \).

We will also consider the model with \( d = 0 \), in which case the process \( \widehat{Y} \) is simply given by the time-augmented Brownian motion \( \widehat{Y} = \widehat{W}\colon t \mapsto (t, W_{t}). \)

\begin{remark}\label{rmk:Xi_rmk}
    In view of Chen's identity~\eqref{eq:Chen_EFM}, the random variable \( \bm\Xi \) has a clear interpretation as the initial condition at \( t = 0 \) of the process \( t \mapsto  \widehat{\mathbb{Y}}_{-\infty, t}^\lambda = {\bf D}_{t}^\lambda \widehat{\mathbb{Y}}_{-\infty, 0}^\lambda \otimes \widehat{\mathbb{Y}}_{0, t}^\lambda \), which summarises the behaviour of the stochastic factors driving the volatility before the initial time. This is a very natural assumption that allows the model to be initialised with a ``non-empty'' memory, a possibility absent from standard signature volatility models. Indeed, unlike the classical signature, the EFM signature with \( \lambda_{i} > 0 \) for all \( i = 1, \ldots, 2+d \) can be defined, thanks to the exponential damping, on infinite intervals such as \( (-\infty, 0] \), which makes it possible to include this initial information. We do not discuss the rigorous construction here and refer the interested reader to~\cite[Section~3]{abijaberEFM}.
\end{remark}

The unique strong solution to~\eqref{syst1 signa EFM} is given by the exponential local martingale
$$S_t=\exp\left(\int_0^t \Sigma_s\,dB_s-\frac12\int_0^t \Sigma_s^2\,ds\right),
\qquad t\in[0,T],$$
where both integrals are well-defined since \( \mathbb{E}\left[\int_{0}^T \Sigma_{t}^2\,dt\right] < \infty\). This follows from the finiteness of the expected EFM signature~\cite[Theorem~5.5]{abijaberEFM} and the integrability conditions imposed on \( \bm\Xi \).

We now present three stochastic volatility models that are particular cases of \eqref{syst1 signa EFM}.

\begin{example}\label{example : volatility}
    \begin{enumerate}[(a)]
        \item The (classical) \textit{signature volatility model} of \citet{abijaber2024signature}  is recovered by choosing $\lambda = 0$, $\bm\Xi\equiv \bluebf{\varnothing}$, \( d = 0 \) and \( g \equiv 1 \). The multidimensional version of this model studied in~\cite{abijaber2025martingale} is obtained by taking \( d > 0 \) instead of \( d = 0 \) and, in addition to the previous assumptions, setting \( \rho = 1 \).  In both cases, the martingality criterion was established in~\cite[Theorems~3.1 and 3.5]{abijaber2025martingale}.
        
        \item Consider model \eqref{syst1 signa EFM} with \( \bm\sigma = \sum_{k = 0}^N \alpha_{k} \bluebf{2}^{\otimes k}\), \( \lambda_2 > 0 \), \( \bm\Xi = \bluebf{\varnothing} \), and $g(t) = \dfrac{\sqrt{\xi_{0}(t)}}{\sqrt{\mathbb{E}[\braket{\bm{\sigma},\widehat{\mathbb{Y}}^{\lambda}_{0,t}}^2]}}$, where \( \xi_{0} \) denotes the initial forward variance curve. Setting $N = 5$ and $\alpha_{2} = \alpha_{4} = 0$, we recover the \textit{Quintic model} of \citet{quintic2022} up to a trivial reparametrisation. \cite[Proposition~A.1]{quintic2022} establishes the martingale property in the case \( \alpha_{5} > 0 \) and \( \rho \leq 0 \).
        
        \item As mentioned in Remark~\ref{rmk:Xi_rmk}, the EFM signature \( \widehat{\mathbb{Y}}_{-\infty, t}^\lambda \) of the time-augmented two-sided Brownian motion $(\widehat{Y}_t)_{t \in \mathbb{R}}$ can be defined over the interval \( (-\infty, t] \) when \( \lambda_{i} > 0 \) for all \( i = 1, \ldots, 2+d \). In this case, the process \( \langle \bm \sigma,  \widehat{\mathbb{Y}}_{-\infty, t}^\lambda \rangle \) is stationary and ergodic; see~\cite[Theorems 3.17 and 5.3]{abijaberEFM}. Hence, the \textit{stationary EFM signature volatility model} is obtained by taking the stationary initialisation $\bm\Xi = \widehat{\mathbb{Y}}_{-\infty, 0}^\lambda$. The integrability assumption on \( \bm\Xi \) is satisfied since the expected signature $\mathbb{E}[\widehat{\mathbb{Y}}_{-\infty, 0}^\lambda]$ is finite; see~\cite[Theorem~5.5]{abijaberEFM}.
    \end{enumerate}
\end{example}

Our main result provides conditions under which the price process \( S \) defined by~\eqref{syst1 signa EFM} is a true martingale, under the assumption that the leading coefficient $\bm{\sigma}^{\bluebf{2}^{\otimes N}}$ in front of \( W_{t}^N / N! \) does not vanish. It extends the results of~\cite[Theorems~3.1 and 3.5]{abijaber2025martingale} and \cite[Proposition~A.1]{quintic2022} to the more general model considered here.

\begin{theorem}\label{th : crit EFM 1d}
    Let $S$ be given by~\eqref{syst1 signa EFM}. Suppose that the leading coefficient $\bm{\sigma}^{\bluebf{2}^{\otimes N}}$ is nonzero.
    \begin{enumerate}[(i)]
        \item If $N=1$ or $\rho=0$, then $S$ is a martingale.
        \item If $N\geq 2$ and $\rho\neq 0$, then $S$ is a martingale if and only if $N$ is odd and $\rho\bm{\sigma}^{\bluebf{2}^{\otimes N}}<0$.
    \end{enumerate}
\end{theorem}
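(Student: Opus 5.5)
The plan is to reduce martingality to non-explosion of an auxiliary SDE through the change-of-measure argument of \citet{Sin98}, and then study that SDE.

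\textbf{Step 1 (reduction).} Define on $[0,T]$ a measure $\mathbb{Q}$ under which $dW_t = \rho\Sigma_t\,dt + d\widetilde W_t$, with $\widetilde W$ a $\mathbb{Q}$-Brownian motion, while $Z$ is unchanged. Under $\mathbb{Q}$, the signature $\widehat{\mathbb{Y}}^\lambda_{0,t}$ (equivalently $\mathbb{X}_t:={\bf D}^\lambda_t\bm\Xi\otimes\widehat{\mathbb{Y}}^\lambda_{0,t}$) solves the EFM-sig-SDE
$$d\mathbb{X}_t=-{\bf\Lambda}\mathbb{X}_t\,dt+\mathbb{X}_t\otimes\circ\big(\bluebf{1}\,dt+\bluebf{2}\,(\rho g(t)\braket{\bm\sigma,\mathbb{X}_t}\,dt+d\widetilde W_t)+\textstyle\sum_j \bluebf{(2+j)}\,dZ^j_t\big).$$
This holds because Chen's relation \eqref{eq:Chen_EFM} and the SDE of \cite[Proposition 3.6]{abijaberEFM} give that $\mathbb{X}$ satisfies the same linear-in-$\mathbb{X}$ SDE with initial value $\bm\Xi$. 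The standard argument (localisation plus $\mathbb{E}[S_T]=\mathbb{Q}(\tau>T)$) then shows that $S$ is a martingale if and only if this SDE does not explode on $[0,T]$ with probability one.

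By the group-like property (Proposition~\ref{prop shuffle signa}), every coordinate is a polynomial in finitely many ``basic'' components. For the coordinate $X_t:=\mathbb{X}_t^{\bluebf{2}}$ we get
$$dX_t=-\lambda_2X_t\,dt+\rho g(t)\braket{\bm\sigma,\mathbb{X}_t}\,dt+d\widetilde W_t.$$
This is an OU-type equation whose drift contains $\rho g(t)\bm\sigma^{\bluebf{2}^{\otimes N}}X_t^N/N!$ plus lower-order terms. Explosion can only occur through $X$: the other words are iterated integrals of $X$, $t$ and $Z$ with exponential damping, so they stay controlled as long as $X$ does.

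\textbf{Step 2 (signature bounds).} Adapt \cite[Lemmas 5.1--5.2]{abijaber2025martingale}. For every word $\bluebf v$ of length $\le N$ other than $\bluebf2^{\otimes N}$, show a bound of the form
$$|\mathbb{X}^{\bluebf v}_t|\le C\big(1+\sup_{s\le t}|X_s|^{N-1}\big)\,\Theta_t,$$
where $\Theta_t$ is a random variable with moments of all orders, built from $\bm\Xi$, $\sup|Z|$ and $\sup|\widetilde W|$. The damping factors $e^{-\lambda(t-u)}\le1$ only help. Terms where $\bluebf2$ appears in a non-terminal position are handled by Itô integration by parts, as in the classical case. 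Also check that $\mathbb{X}^{\bluebf2^{\otimes N}}_t=X_t^N/N!$, which holds by the group-like property.

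\textbf{Step 3 (non-explosion).} Take $N$ odd and $\rho\bm\sigma^{\bluebf2^{\otimes N}}<0$, and set $Y=X-\widetilde W$. Then the drift of $Y$ equals $c(t)X^N+R_t$, with $c(t)\le -c_*<0$ (using $g\ge g_*$) and $|R_t|\le C(1+\sup|X|^{N-1})\Theta$. A comparison argument on $Y^2$ shows that $|Y|$ cannot exceed a level of order $\Theta$ times a polynomial in $\sup|\widetilde W|$. Indeed, whenever $|X|$ is large, the strongly mean-reverting odd power dominates. Hence $\tau=\infty$ a.s. The cases $N=1$ (linear drift) and $\rho=0$ (no feedback) are immediate, which gives (i).

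\textbf{Step 4 (explosion).} In the remaining cases, either $N$ is even or $\rho\bm\sigma^{\bluebf2^{\otimes N}}>0$, and then the drift pushes $X$ outward on one side with degree $N\ge2$. On an event of positive probability where $\widetilde W$, $Z$ and $\bm\Xi$ stay in a bounded set (using positivity of Wiener measure of tubes and $g\ge g_*$), $X$ first reaches a large value $K$ with the right sign. After that, the lower-order terms are dominated by $\tfrac12 c_*K^N$. Comparison with the ODE $\dot x = c\,x^N$ then gives a blow-up before $T$.

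\textbf{Main obstacle.} I expect the hard part to be Step 2. The EFM algebra replaces pure iterated integrals by damped ones and adds the random initial value ${\bf D}^\lambda_t\bm\Xi$. So the polynomial domination of each word by $\sup|X|^{N-1}$ must be redone, keeping track of the Chen mixing with $\bm\Xi$. I would first expand $\mathbb{X}_t^{\bluebf v}=\sum_{\bluebf v=\bluebf{uw}}e^{-\lambda_{\bluebf u}t}\bm\Xi^{\bluebf u}\,\widehat{\mathbb{Y}}^{\lambda,\bluebf w}_{0,t}$ and then bound each factor separately.
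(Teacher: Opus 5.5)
There is a genuine gap when $d\ge1$. Your Step~1 reduction and the identity $\mathbb X^{\bluebf{2}^{\otimes N}}_t=X_t^N/N!$ agree with the paper, and for $d=0$ your plan goes through:
\begin{itemize}
\item the Step~2 bound is exactly Lemma~\ref{lemma sup maj deter}, with $\Theta$ a polynomial in the $\bm\Xi^{\bluebf{u}}$;
\item your pathwise comparison on $Y=X-\widetilde W$ is a valid, more elementary alternative to the paper's $2N$-th moment estimate, though the moment estimate covers all $d$ at once;
\item Step~4 is essentially the paper's argument on $\mathcal A_\alpha\cap\mathcal B_R$.
\end{itemize}
Two details in Step~4 need fixing even for $d=0$. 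The tube must be around a steep line $\alpha t$, not merely a bounded set, because lower-order terms can make $0$ attracting. The running supremum in the remainder must also be handled level by level.

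The gap is Step~2 for $d\ge1$, when $\bm\sigma$ charges words containing the letters $\bluebf{3},\dots,\bluebf{d+2}$. Take $\bm\Xi=\bluebf{\varnothing}$ and $\bluebf{v}=\bluebf{2}^{\otimes(N-1)}\bluebf{3}$, so that
\[
\mathbb X^{\bluebf{v}}_t=\frac{e^{-\lambda_{\bluebf{v}}t}}{(N-1)!}\int_0^t e^{\lambda_{\bluebf{v}}s}X_s^{N-1}\,dZ^1_s .
\]
Integrating by parts moves the differential onto $X^{N-1}$. Its drift contains $(N-1)\rho g\,\bm\sigma^{\bluebf{2}^{\otimes N}}X^{2N-2}/N!$, which produces a term of order $\sup|Z^1|\,\sup|X|^{2N-2}$ with exponent $\ge N$. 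It also produces a stochastic integral $\int e^{\lambda_{\bluebf{v}}s}Z^1_sX_s^{N-2}\,d\widetilde W_s$. Neither term is bounded by $(1+\sup|X|^{N-1})\,\Theta(\bm\Xi,\sup|Z|,\sup|\widetilde W|)$. Pathwise integration by parts only works when every integrator is $dt$ or $dX$, i.e.\ for $d=0$. With $Z$-letters, the available estimate holds only in expectation and loses an $\varepsilon$ in the exponent; that is Lemma~\ref{lemme espérance majo EFM}.

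This breaks your Steps~3 and~4 for $d\ge1$. Below, $U$ denotes the paper's $U$, which is your $X$ minus $e^{-\lambda_2t}\bm\Xi^{\bluebf{2}}$.

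For sufficiency, the paper avoids pathwise control altogether. It applies Itô's formula to $U^{2N}$ up to $T\wedge\tau^U_n$ and uses Young's inequality together with Lemma~\ref{lemme espérance majo EFM}, choosing $\frac{3N-1}{N}(1+\alpha)(N-1+\varepsilon)<3N-1$. This gives $\sup_n\mathbb E[U^{2N}_{T\wedge\tau^U_n}]<\infty$, hence $\mathbb P(\tau^U_n<T)\to0$.

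For necessity, keeping $Z$ in a tube does not control coordinates such as $\int U^k\,dZ$ while $U$ blows up. You might replace $\Theta$ by an a.s.\ finite, path-dependent constant (e.g.\ from a time change), which is all Step~3 would need. But in Step~4 you could not assume such a constant is bounded on your tube event, since it depends on the exploding path. The missing idea is the event $\mathcal A$ of \eqref{eq:cond1}--\eqref{eq:cond4}:
\begin{itemize}
\item start at a time $t_0$ where $U_{t_0}\ge M$ and the $Z$-coordinates are at most $1$;
\item on each interval $[\tau^U_n,\tau^U_{n+1}\wedge\tilde\tau^U_{n+1}]$, of length at most $Cn^{-N}$, require each coordinate to stay below $n^{N_2(\bluebf{w})+\varepsilon|\bluebf{w}|}$ and the $W$-increment to stay below $n^{-N/2+\varepsilon}$.
\end{itemize}
The remainder is then $O(n^{N\varepsilon-1})$, and ODE comparison yields $\tau^U_{n+1}-\tau^U_n\le Cn^{-N}$. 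Positivity of $\mathbb P(\mathcal A\mid\mathcal F_0)$ on $\mathcal B_R$ follows by summing over $n\ge M$ the tail bounds given by the damped exponential inequality, Lemma~\ref{lemme : bernstein adapt}.
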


\subsection{Martingality and Explosion of an EFM-Sig-SDE}

In this subsection, we show that the martingality of $S$ reduces to the study of the explosion of the solution to
\begin{equation}\label{eds X}
X_t= \rho\int_0^tg(s)\braket{\bm{\sigma},{\bf D}^{\lambda}_s\bm\Xi\otimes\widehat{\mathbb{Y}}^{X,\lambda}_{0,s}}\,ds+W_t,\qquad t\in[0,T],
\end{equation}
where $\widehat{\mathbb{Y}}^{X,\lambda}_{0,t}$ denotes the \textit{EFM} signature of $t \mapsto (t,X_t,Z_t^1,\dots,Z_t^d)$ over $[0,t]$.

The following lemma allows us to rewrite the drift of \( X \) as a linear functional of \( \widehat{\mathbb{Y}}^{X,\lambda}_{0,t} \).

\begin{lemma}\label{lem:projection}
    There exists an \( \mathcal{F}_0 \)-measurable \( T^N(\mathbb{R}^{2+d}) \)-valued process \( \bm\ell = (\bm\ell_{t})_{t \in [0, T]} \) such that 
    \[
        g(t)\braket{\bm{\sigma},{\bf D}^{\lambda}_t\bm\Xi\otimes\widehat{\mathbb{Y}}^{X,\lambda}_{0,t}} = \langle \bm\ell_{t}, \widehat{\mathbb{Y}}^{X,\lambda}_{0,t} \rangle, \quad t \in [0, T],
    \]
    where 
    \begin{equation}\label{eq:ell_def}
        \bm\ell_{t}^{\bluebf{v}} = \sum_{\bluebf{u} \in V}g(t)\bm\sigma^{\bluebf{uv}}e^{-\lambda_{\bluebf{u}}t}\bm\Xi^{\bluebf{u}}, \quad \bluebf{v} \in V.
    \end{equation}
    In particular, its leading term is given by \( \bm{\ell}^{\bluebf{2}^{\otimes N}}_{t} = g(t)\bm{\sigma}^{\bluebf{2}^{\otimes N}}\).
\end{lemma}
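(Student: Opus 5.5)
The argument is pure algebra: expand the bracket using the definitions of the tensor product and of ${\bf D}^{\lambda}_t$, then regroup the terms word by word. Write $\mathbb{Y}:=\widehat{\mathbb{Y}}^{X,\lambda}_{0,t}$ for brevity. By the definition of ${\bf D}^{\lambda}_t$, the tensor $\bm p:={\bf D}^{\lambda}_t\bm\Xi$ has coordinates $\bm p^{\bluebf{u}}=e^{-\lambda_{\bluebf{u}}t}\bm\Xi^{\bluebf{u}}$. By the definition of the tensor product in Subsection~\ref{subsec : tensor}, written in the word basis, the coordinate of $\bm p\otimes\mathbb{Y}$ at a word $\bluebf{w}$ is
\[
(\bm p\otimes\mathbb{Y})^{\bluebf{w}}=\sum_{\bluebf{u}\bluebf{v}=\bluebf{w}}\bm p^{\bluebf{u}}\mathbb{Y}^{\bluebf{v}},
\]
where the sum runs over all ways of splitting $\bluebf{w}$ as a concatenation $\bluebf{uv}$, including the empty prefix and the empty suffix. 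This is just the coordinate form of $\bm r^n=\sum_k\bm p^k\otimes\bm q^{n-k}$.

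Next, since $\bm\sigma\in T^N(\RR^{2+d})$, the bracket is a finite sum over words $\bluebf{w}$ with $|\bluebf{w}|\le N$:
\[
g(t)\braket{\bm\sigma,\bm p\otimes\mathbb{Y}}=\sum_{|\bluebf{w}|\le N}g(t)\bm\sigma^{\bluebf{w}}\sum_{\bluebf{uv}=\bluebf{w}}e^{-\lambda_{\bluebf{u}}t}\bm\Xi^{\bluebf{u}}\mathbb{Y}^{\bluebf{v}}.
\]
The double sum is finite, so it can be reindexed by pairs $(\bluebf{u},\bluebf{v})$ with $|\bluebf{u}|+|\bluebf{v}|\le N$ and the terms can be collected according to $\bluebf{v}$. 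The resulting expression is $\sum_{\bluebf{v}}\bm\ell^{\bluebf{v}}_t\mathbb{Y}^{\bluebf{v}}$, with $\bm\ell_t$ exactly as in~\eqref{eq:ell_def}. The sum over $\bluebf{u}$ in~\eqref{eq:ell_def} is finite because $\bm\sigma^{\bluebf{uv}}=0$ whenever $|\bluebf{u}|+|\bluebf{v}|>N$. For the same reason $\bm\ell^{\bluebf{v}}_t=0$ whenever $|\bluebf{v}|>N$, so $\bm\ell_t\in T^N(\RR^{2+d})$.

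It remains to check measurability and the leading term.
\begin{enumerate}[(a)]
\item Each $\bm\ell^{\bluebf{v}}_t$ is a finite linear combination of the $\mathcal{F}_0$-measurable variables $\bm\Xi^{\bluebf{u}}$, with deterministic coefficients $g(t)\bm\sigma^{\bluebf{uv}}e^{-\lambda_{\bluebf{u}}t}$. Hence $\bm\ell_t$ is $\mathcal{F}_0$-measurable for every $t$.
\item For $\bluebf{v}=\bluebf{2}^{\otimes N}$, the only word $\bluebf{u}$ with $|\bluebf{uv}|\le N$ is the empty word. Using $\bm\Xi^{\bluebf{\varnothing}}=1$ (part of the group-like assumption) and $\lambda_{\bluebf{\varnothing}}=0$, this gives $\bm\ell^{\bluebf{2}^{\otimes N}}_t=g(t)\bm\sigma^{\bluebf{2}^{\otimes N}}$.
\end{enumerate}

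No step here is a genuine obstacle. The only points needing care are writing the coordinate formula for the tensor product correctly in terms of concatenation, so that the prefix $\bluebf{u}$ carries the ${\bf D}^{\lambda}_t\bm\Xi$ factor and the suffix $\bluebf{v}$ carries the signature, and noting that the finiteness of $\bm\sigma$ justifies the rearrangement of the sums.
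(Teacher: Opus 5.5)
Your proposal is correct and follows the paper's proof. Like the paper, you expand the coordinates of ${\bf D}^{\lambda}_t\bm\Xi\otimes\widehat{\mathbb{Y}}^{X,\lambda}_{0,t}$ by splitting words into prefix and suffix, regroup by the suffix word, and use $\bm\sigma\in T^N(\RR^{2+d})$ to justify the finite rearrangement. Your explicit checks of the leading term (via $\bm\Xi^{\bluebf{\varnothing}}=1$ and $\lambda_{\bluebf{\varnothing}}=0$) and of $\mathcal{F}_0$-measurability fill in details the paper leaves implicit.
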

\begin{proof}
    By definition of the tensor product, we have
    \begin{align*}
        g(t)\braket{\bm{\sigma},{\bf D}^{\lambda}_t\bm\Xi\otimes\widehat{\mathbb{Y}}^{X,\lambda}_{0,t}} &= g(t)\sum_{n \geq 0}\sum_{|\bluebf{v}| = n}\bm\sigma^{\bluebf{v}}\bigg(\sum_{\substack{\bluebf{u}, \bluebf{w} \in V, \\ \bluebf{uw} = \bluebf{v}}}e^{-\lambda_{\bluebf{u}}t}\bm\Xi^{\bluebf{u}}\widehat{\mathbb{Y}}^{X,\lambda, \bluebf{w}}_{0,t}\bigg) \\ 
        &= \sum_{n \geq 0}\sum_{|\bluebf{w}| = n}\bigg(\sum_{\bluebf{u} \in V}g(t)\bm\sigma^{\bluebf{uw}}e^{-\lambda_{\bluebf{u}}t}\bm\Xi^{\bluebf{u}}\bigg)\widehat{\mathbb{Y}}^{X,\lambda, \bluebf{w}}_{0,t},
    \end{align*}
    so that \eqref{eq:ell_def} holds, and the sum is finite since \( \bm\sigma \in T^N(\mathbb{R}^{2+d}) \).
\end{proof}

We first establish the existence of a solution to~\eqref{eds X} up to the explosion time $\tau_\infty^X:=\lim_{n\to\infty}\tau_n^X$, where $\tau_n^X:=\inf\{t\in[0,T]:|X_t|=n\}$ with the convention $\inf\varnothing=+\infty$. However, rather than studying the explosion time \( \tau_{\infty}^X \) as in \cite[Sections~6--7]{abijaber2025martingale}, it is more convenient in the EFM signature setting to work with the process $U_t:=\langle \bluebf{2}, \widehat{\mathbb{Y}}^{X,\lambda}_{0,t}\rangle$ and to analyse its explosion time $\tau_\infty^U$ (defined analogously). Whenever $X$ is well-defined, $U$ satisfies
\begin{equation}\label{eds U}
 dU_t=(-\lambda_2U_t+\rho \braket{\bm{\ell}_{t}, \widehat{\mathbb{Y}}^{X,\lambda}_{0,t}})\,dt+dW_t,\qquad U_0=0.
\end{equation}
The proposition below shows that these two explosion times are in fact equal.

\begin{proposition}\label{prop : well def of X}
    The SDEs~\eqref{eds X} and~\eqref{eds U} each admit a unique strong solution. Moreover,
    \begin{equation}\label{lien temps X}
        \tau^X_\infty=\tau^U_\infty=\inf\left\{t\in[0,T] : \int_0^t \braket{\bm{\ell}_{s}, \widehat{\mathbb{Y}}^{X,\lambda}_{0,s}}^2\,ds = \infty\right\}.
    \end{equation}
\end{proposition}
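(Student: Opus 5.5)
The plan is to recast \eqref{eds X} as an SDE for a finite-dimensional state vector with locally Lipschitz coefficients. Then I will identify the explosion times through the fact that the truncated EFM signature is a polynomial functional of the path of $X$ and of $Z$.

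\textbf{Step 1 (existence and uniqueness).} Consider the truncated EFM signature $\mathbb{Y}_t:=\pi_N(\widehat{\mathbb{Y}}^{X,\lambda}_{0,t})\in T^N(\RR^{2+d})$ together with $X$. By the SDE $d\mathbb{X}=-{\bf\Lambda}\mathbb{X}\,dt+\mathbb{X}\otimes\circ d\widehat Y$ recalled after Definition~\ref{def:efm-def}, the pair $(X,\mathbb{Y})$ solves a finite-dimensional system. Its drift is $\rho\langle\bm\ell_t,\mathbb{Y}_t\rangle\,dt$ (Lemma~\ref{lem:projection}; $\bm\ell_t$ is $\mathcal F_0$-measurable, continuous in $t$, and bounded on $[0,T]$ for each $\omega$). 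The level-$n$ component is driven by level $n-1$ times $d(t,X,Z)$, with Itô corrections that are polynomial. All coefficients are polynomial in the state with $\mathcal F_0$-measurable, time-continuous coefficients, hence locally Lipschitz. The standard localisation argument for locally Lipschitz SDEs, carried out conditionally on $\mathcal F_0$ or simply pathwise in $\omega$ for the coefficients, gives a unique strong solution up to the explosion time of $(X,\mathbb{Y})$. Once $X$ is defined, $U=\langle\bluebf 2,\widehat{\mathbb{Y}}^{X,\lambda}_{0,\cdot}\rangle$ is defined and solves \eqref{eds U}. Conversely, since $U_t=\int_0^t e^{-\lambda_2(t-s)}dX_s$, we have $X_t=U_t+\lambda_2\int_0^tU_s\,ds$, so solutions of \eqref{eds U} (as a system with the higher levels) correspond bijectively to solutions of \eqref{eds X}. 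Uniqueness therefore transfers.

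\textbf{Step 2 (the three explosion times coincide).} Let $\tau^I$ denote the right-hand side of \eqref{lien temps X}.

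First, $\tau^X_\infty$ equals the explosion time of $(X,\mathbb{Y})$. Before $\tau^X_n$, $X$ is bounded. Each level of $\mathbb{Y}$ is then obtained by iterated integrals against $X$, $t$ and $Z$ (continuous on $[0,T]$), and by induction on the level these are finite, in fact continuous. A cleaner route to the same conclusion is to integrate by parts so that $X$ appears only through $X_s\,ds$ terms and boundary terms.

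Second, by the relation $X=U+\lambda_2\int U$ and $U_t=X_t-\lambda_2\int_0^te^{-\lambda_2(t-s)}X_s\,ds$, we have $\sup_{[0,t]}|X|<\infty$ if and only if $\sup_{[0,t]}|U|<\infty$. Hence $\tau^X_\infty=\tau^U_\infty$.

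Finally, for $\tau^I$: on $[0,\tau^X_n]$ the integrand $\langle\bm\ell,\mathbb{Y}\rangle$ is continuous, so the integral is finite and $\tau^I\ge\tau^X_\infty$. Conversely, on $\{\tau^X_\infty\le T\}$ the equation $X_t-W_t=\rho\int_0^t\langle\bm\ell_s,\mathbb{Y}_s\rangle ds$ shows the following. If $\int_0^{\tau^X_\infty}\langle\bm\ell,\mathbb{Y}\rangle^2ds<\infty$, then by Cauchy--Schwarz the drift is bounded on $[0,\tau^X_\infty)$. Since $W$ is continuous on $[0,T]$, $X$ would be bounded there, contradicting $|X_{\tau_n}|=n$. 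So the integral is infinite at $\tau^X_\infty$, and $\tau^I\le\tau^X_\infty$. (When $\rho=0$ there is no explosion and both sides equal $+\infty$.)

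\textbf{Main obstacle.} The delicate point is the localisation in Step 1: the coefficient process $\bm\ell$ is random, through $\bm\Xi$, and only $\mathcal F_0$-measurable, not bounded. I would handle this by working on the events $\{\sup_t|\bm\ell_t|\le k\}\in\mathcal F_0$, whose union is $\Omega$. On each event the coefficients are deterministic-bounded and locally Lipschitz, so classical theory applies there, and the solutions glue consistently by uniqueness.
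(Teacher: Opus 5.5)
Your Step~1, the identity $\tau^X_\infty=\tau^U_\infty$ via $X=U+\lambda_2\int_0^\cdot U_s\,ds$, and the Cauchy--Schwarz argument for $\tau^I\le\tau^X_\infty$ all coincide with the paper's proof. The gap is in $\tau^I\ge\tau^X_\infty$. You reduce it to the claim that the full state $(X,\mathbb{Y})$ cannot explode while $X$ stays bounded. That claim is the whole content of this step, and the level-by-level induction you give for it is circular. A coordinate $\int_0^t e^{-\lambda_{\bluebf{w2}}(t-s)}\langle\bluebf w,\mathbb{Y}_s\rangle\circ dX_s$ is not controlled by $\sup|X|$. It is controlled by the semimartingale decomposition of $X$, and the drift $\rho\langle\bm\ell_s,\mathbb{Y}_s\rangle$ of $X$ involves every coordinate up to level $N$. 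So bounding level $k$ already requires bounding level $N$. Knowing that $\rho\int_0^t\langle\bm\ell_s,\mathbb{Y}_s\rangle\,ds=X_t-W_t$ is bounded does not help, because it does not bound $\int_0^t H_s\langle\bm\ell_s,\mathbb{Y}_s\rangle\,ds$.

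Your ``cleaner route'' is the right idea, but you only name it, and your description of it is wrong for $d>0$. You need a representation of every coordinate with $|\bluebf w|\le N$ that contains no $dX$-integral:
\begin{itemize}
\item group-likeness gives $\langle\bluebf 2^{\otimes k},\widehat{\mathbb{Y}}^{X,\lambda}_{0,t}\rangle=U_t^k/k!$;
\item integration by parts, together with the shuffle identity, reduces the other words to polynomials in values of $U$ and time integrals;
\item as soon as a letter $\bluebf 3,\dots,\bluebf{d+2}$ occurs, Itô integrals against $Z$ with integrands built from $X$ also appear. Already the word $\bluebf{32}$ produces $\int_0^t e^{-\lambda_{\bluebf{32}}(t-s)}X_s\,dZ^1_s$, so the terms are not only ``$X_s\,ds$ terms and boundary terms''.
\end{itemize}
For $d=0$, Lemma~\ref{lemma sup maj deter}, applied to the stopped path uniformly in the localisation, gives exactly this control pathwise, and your argument closes. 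For $d>0$, the $dZ$-terms are not bounded by $\sup|X|$, so you must argue stochastically. For instance, inductively over the words: a continuous local martingale on $[0,S)$ with $\int_0^S H_s^2\,ds<\infty$ has a finite limit at $S$.

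The paper avoids this bookkeeping by invoking the moment bound of Lemma~\ref{lemme espérance majo EFM}. That lemma bounds $\mathbb{E}\bigl[\int_0^{\tau^U_n\wedge t}\langle\bm\ell_s,\widehat{\mathbb{Y}}^{X,\lambda}_{0,s}\rangle^2ds\bigr]$ by a power of $n$, using the moments of $\bm\Xi$. Hence the integral is a.s.\ finite before each $\tau^U_n$. A completed pathwise version of your argument would not need the moment assumption on $\bm\Xi$, but you would have to write out the reduction above.

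A minor point: $g$ is only assumed bounded, not continuous, so $\bm\ell_t$ need not be continuous in $t$. The bound $|\bm\ell^{\bluebf v}_t|\le g^*\sum_{\bluebf u}|\bm\sigma^{\bluebf{uv}}||\bm\Xi^{\bluebf u}|$ is all you actually use.
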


\begin{proof}
    The case \( \rho = 0 \) is trivial, with \( \tau_{\infty}^X = \tau_{\infty}^U = \infty \). We now assume that \( \rho \neq 0 \).

    We first establish strong existence and uniqueness for~\eqref{eds U}.
    This equation is equivalent to a system for $U=\widetilde{\mathbb{Y}}^{\bluebf{2}}$, where $(\widetilde{\mathbb{Y}}^{\bluebf{w}})_{|\bluebf{w}|\leq N}$ satisfies $\widetilde{\mathbb{Y}}^{\bluebf{\varnothing}}\equiv 1$ and, for all $|\bluebf{w}|<N$,
    \begin{align*}
        d\widetilde{\mathbb{Y}}_t^{\bluebf{w1}}&=(-\lambda_{\bluebf{w1}}\widetilde{\mathbb{Y}}_t^{\bluebf{w1}} +\widetilde{\mathbb{Y}}_t^{\bluebf{w}})\,dt,\\
        d\widetilde{\mathbb{Y}}_t^{\bluebf{w2}}&=(-\lambda_{\bluebf{w2}}\widetilde{\mathbb{Y}}_t^{\bluebf{w2}} +\widetilde{\mathbb{Y}}_t^{\bluebf{w}}\rho\sum_{|\bluebf{u}|\leq N }\bm{\ell}_{ t}^{\bluebf{u}}\widetilde{\mathbb{Y}}_t^{\bluebf{u}})\,dt + \widetilde{\mathbb{Y}}_t^{\bluebf{w}} \circ dW_t, \\
        d\widetilde{\mathbb{Y}}_t^{\bluebf{wi}}&=-\lambda_{\bluebf{wi}}\widetilde{\mathbb{Y}}_t^{\bluebf{wi}}\,dt +\widetilde{\mathbb{Y}}_t^{\bluebf{w}} \circ dZ^{i-2}_t, \quad i\in\{3,\dots,2+d\},
    \end{align*}
    with the initial condition \( \widetilde{\mathbb{Y}}^{\bluebf{w}}_{0} = 0 \) for all $\bluebf{w}$ such that $0 < |\bluebf{w}| \leq N$.
    The coefficients of the system are polynomial in the state variables $(\widetilde{\mathbb{Y}}^{\bluebf{w}})_{|\bluebf{w}|\leq N}$, with $\mathcal F_0$-measurable polynomial coefficients that are bounded in \( t \) for each realisation of \( \bm\Xi \); they are therefore locally Lipschitz in the state variables. The standard existence and pathwise uniqueness theorem for Stratonovich SDEs extends to this setting and thus ensures the existence of a unique strong solution up to a possible explosion time.
    
    Define $X_t=\lambda_2\int_0^t U_s ds + U_t$. Then $dX_t=\rho\braket{\bm{\ell}_t, \widetilde{\mathbb{Y}}^{}_{t}}\,dt + dW_t$. One checks by induction that, for all $\bluebf{w}$ such that $|\bluebf w|\le N$, $\widetilde{\mathbb{Y}}^{\bluebf{w}}_t=\widehat{\mathbb{Y}}^{X,\lambda,\bluebf{w}}_{0,t}$,
   where the right-hand side denotes the corresponding component of the EFM signature of the process $(t,X_t,Z_t^1,\dots,Z_t^d)$ over $[0,t]$. This proves existence. Uniqueness follows from the uniqueness of the solution to the above system.
    
    We now study the explosion times. By integration by parts,
        $$X_t=U_t+\lambda_2\int_0^t U_s ds, \quad U_t=X_t-\lambda_2\int_0^t e^{-\lambda_2(t-s)}X_s ds,$$
    which implies $\tau^X_\infty=\tau^U_\infty$.
    
    On $\{\tau^X_\infty<\infty\}$, the Cauchy--Schwarz inequality yields
        $\rho^2\int_0^{\tau^X_\infty}\braket{\bm{\ell}_{s}, \widehat{\mathbb{Y}}^{X,\lambda}_{0,s}}^2 ds=\infty.$

    By Lemma~\ref{lemme espérance majo EFM} below, for all words $\bluebf w$ such that $|\bluebf w|\leq N$, we have
        $$\EE\left[\rho^2\int_0^{\tau_n^U\wedge t}g(s)^2\braket{\bluebf w,{\bf D}_s^{\lambda}\bm\Xi\otimes\widehat{\mathbb Y}^{X,\lambda}_{0,s}}^2ds\right]
        \leq
        (\rho g^*)^2C_{\lambda,\bluebf w,t}(\bm\Xi)n^{\frac{4|\bluebf w|+1}{2}}.
        $$
    Since $\bm{\sigma}\in T^N(\RR^{2+d})$, we deduce that, for all \( n \in \mathbb{N} \), $\rho^2\int_0^{\tau^U_n \wedge t} g(s)^2\braket{\bm{\sigma},{\bf D}^{\lambda}_s\bm\Xi \otimes\widehat{\mathbb{Y}}^{X,\lambda}_{0,s}}^2 ds <\infty$ almost surely, which yields the result.
\end{proof}

We now state the lemma used in the above proof, which is the analogue of~\cite[Lemma 5.2]{abijaber2025martingale} for EFM signatures. The proof of the first assertion closely follows that of~\cite[Lemma 5.2]{abijaber2025martingale} and is therefore omitted.

\begin{lemma}\label{lemme espérance majo EFM}
    Let $\lambda\in\mathbb{R}^{2+d}_{+}$, let $Z$ be a $d$-dimensional Brownian motion, and let $X:[0,T]\to\mathbb{R}$ be an Itô process such that $[Z^i,X]=0$ for all $i$. Denote by $\widehat{\mathbb{Y}}^{X,\lambda}_{0,t}$ the \textit{EFM} signature of $t \mapsto (t,X_t,Z_t^1,\ldots,Z_t^d)$. Then, for every $\bluebf{w}\in V$, $q\ge2$, $\varepsilon,\alpha>0$, and every stopping time $\tau\le T$,
        $$\mathbb{E}\!\left[\int_0^\tau|\langle\bluebf w,\widehat{\mathbb Y}^{X,\lambda}_{0,t}\rangle|^qdt \right]\le C_{\lambda,\bluebf w,q,\varepsilon,T}\mathbb{E}\!\left[\int_0^\tau\left(1+|\langle\bluebf2,\widehat{\mathbb Y}^{X,\lambda}_{0,t}\rangle|^{q(N_2(\bluebf w)+\varepsilon)}\right)dt \right].$$
    Moreover, for every $\mathcal F_0$-measurable $\bm\Xi$, 
        $$\mathbb{E}\!\left[\int_0^\tau|\langle\bluebf w,{\bf D}_t^{\lambda}\bm\Xi\otimes\widehat{\mathbb Y}^{X,\lambda}_{0,t}\rangle|^qdt\right]\le C_{\lambda,\bluebf w,q,\varepsilon,\alpha,T}(\bm\Xi)\left(1+\mathbb{E}\!\left[\int_0^\tau\left(1+|\langle\bluebf2,\widehat{\mathbb Y}^{X,\lambda}_{0,t}\rangle|^{q(1+\alpha)(N_2(\bluebf w)+\varepsilon)}\right)dt\right]\right),$$
    where $C_{\lambda,\bluebf w,q,\varepsilon,\alpha,T}(\bm\Xi)$ depends on $\bm\Xi$ only through the moments of $\bm\Xi^{\bluebf v}$, $\bluebf v\le\bluebf w$, and where $\bluebf v\le\bluebf w$ means that $\bluebf w=\bluebf{v\tilde v}$ for some $\bluebf{\tilde v}\in V$.

\end{lemma}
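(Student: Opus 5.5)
Since the first assertion is said to follow \cite[Lemma 5.2]{abijaber2025martingale}, we take it as given and concentrate on the second. The plan is to expand the inner product with the tensor product and reduce to the first assertion via Hölder's inequality in $\omega$, handling the $\mathcal F_0$-measurable factor $\bm\Xi$ by conditioning on $\mathcal F_0$ (or by separating it through Hölder).

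First, by the definition of the tensor product, exactly as in the proof of Lemma~\ref{lem:projection}, we have
\[
\langle\bluebf w,{\bf D}_t^{\lambda}\bm\Xi\otimes\widehat{\mathbb Y}^{X,\lambda}_{0,t}\rangle=\sum_{\bluebf{u}\bluebf{\tilde v}=\bluebf w}e^{-\lambda_{\bluebf u}t}\bm\Xi^{\bluebf u}\langle\bluebf{\tilde v},\widehat{\mathbb Y}^{X,\lambda}_{0,t}\rangle .
\]
This is a finite sum over the $|\bluebf w|+1$ deconcatenations, and every $\bluebf u$ appearing satisfies $\bluebf u\le\bluebf w$. Since $\lambda\ge0$, each discount factor $e^{-\lambda_{\bluebf u}t}$ is at most $1$. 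By convexity, $|\sum_{k=1}^m a_k|^q\le m^{q-1}\sum_{k=1}^m|a_k|^q$, so it suffices to bound each term
\[
\mathbb E\Big[\int_0^\tau|\bm\Xi^{\bluebf u}|^q|\langle\bluebf{\tilde v},\widehat{\mathbb Y}^{X,\lambda}_{0,t}\rangle|^q\,dt\Big].
\]

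Next we apply Hölder's inequality on $\Omega\times[0,T]$ with exponents $p=(1+\alpha)/\alpha$ and $p'=1+\alpha$. This bounds the term above by
\[
\Big(T\,\mathbb E|\bm\Xi^{\bluebf u}|^{qp}\Big)^{1/p}\,\mathbb E\Big[\int_0^\tau|\langle\bluebf{\tilde v},\widehat{\mathbb Y}^{X,\lambda}_{0,t}\rangle|^{q(1+\alpha)}dt\Big]^{1/(1+\alpha)}.
\]
The first factor is finite by the moment assumption and depends only on moments of $\bm\Xi^{\bluebf u}$ with $\bluebf u\le\bluebf w$. For the second factor we apply the first assertion with exponent $q(1+\alpha)\ge2$, which gives
\[
C\,\mathbb E\Big[\int_0^\tau\Big(1+|\langle\bluebf2,\widehat{\mathbb Y}^{X,\lambda}_{0,t}\rangle|^{q(1+\alpha)(N_2(\bluebf{\tilde v})+\varepsilon)}\Big)dt\Big].
\]
We then use $x^{1/(1+\alpha)}\le 1+x$ to remove the root; this is the source of the additive $1$ in front of the expectation in the statement.

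Finally, since $\bluebf{\tilde v}$ is a suffix of $\bluebf w$, we have $N_2(\bluebf{\tilde v})\le N_2(\bluebf w)$. Hence $|x|^{a}\le 1+|x|^{b}$ for $a\le b$ lets us replace each exponent by the largest one, $q(1+\alpha)(N_2(\bluebf w)+\varepsilon)$. Summing the finitely many terms yields the claim, with a constant depending on $\lambda,\bluebf w,q,\varepsilon,\alpha,T$ and on the moments of the $\bm\Xi^{\bluebf u}$.

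The main obstacle is really the first assertion, whose proof we are allowed to skip. Within the second assertion, the only delicate point is that $\bm\Xi$ and the signature of $X$ are not independent, because $X$'s drift involves $\bm\Xi$. The Hölder step above avoids needing independence, and this is exactly why the exponent is inflated by the factor $(1+\alpha)$.
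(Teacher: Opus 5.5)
Your proposal is correct and follows the paper's proof. Both arguments expand the tensor product into deconcatenation terms $e^{-\lambda_{\bluebf u}t}\bm\Xi^{\bluebf u}\langle\bluebf{\tilde v},\widehat{\mathbb Y}^{X,\lambda}_{0,t}\rangle$, reduce to each term by convexity, and conclude via Hölder, the first assertion (applied with exponent $q(1+\alpha)$) and $N_2(\bluebf{\tilde v})\le N_2(\bluebf w)$; you simply make explicit the Hölder exponents and the step $x^{1/(1+\alpha)}\le 1+x$ that the paper's sketch leaves implicit.
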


\begin{proof}
    Note that, for all $t\in[0,T]$, we have
        $\braket{\bluebf{w},{\bf D}^{\lambda}_t\bm\Xi\otimes\widehat{\mathbb{Y}}^{X,\lambda}_{0,t}}=\sum_{\substack{\bluebf{v},\bluebf{\tilde{v}}\\\bluebf{v\tilde{v}}=\bluebf{w}}}e^{-\lambda_{\bluebf{v}}t}\bm\Xi^{\bluebf{v}}\widehat{\mathbb{Y}}_{0,t}^{X,\lambda,\bluebf{{\tilde{v}}}}.$
    Then, by Jensen's inequality, it suffices to bound
    $
    \EE\left[\int_0^\tau |\bm\Xi^{\bluebf v}\widehat{\mathbb Y}_{0,t}^{X,\lambda,\bluebf{\tilde v}}|^qdt\right]
    $
    for all $\bluebf v,\bluebf{\tilde v}$ such that $\bluebf{v\tilde v}=\bluebf w$.
    The second assertion then follows by applying Hölder's and Jensen's inequalities together with the first assertion, and using that $N_2(\bluebf{\tilde{v}})\leq N_2(\bluebf{w})$.
\end{proof}

We now have all the necessary tools to establish the link between the martingality of the price process $S$ and the explosion of $X$, the solution to~\eqref{eds X}. This connection is made explicit in the following theorem.

\begin{theorem}\label{th : link S X}
    For $S$ defined in~\eqref{syst1 signa EFM}, it holds that 
    \begin{equation}\label{eq : link S x}
        S \textit{ is a martingale on } [0,T] \iff \PP(\tau_\infty^U \leq T)=0,        
    \end{equation}
    where $\tau_\infty^U$ is the explosion time of the solution $U$ to~\eqref{eds U}.
\end{theorem}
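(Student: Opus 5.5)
The plan is the classical change-of-measure argument of \citet{Sin98}, in the form used in \cite[Theorem~4.2]{abijaber2025martingale}. We use that a nonnegative local martingale is a supermartingale, so $S$ is a true martingale on $[0,T]$ if and only if $\EE[S_T]=1$.

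The first step localises the price. Let $\tau_n:=\tau_n^U$ be the hitting times of level $n$ by the solution $U$ of~\eqref{eds U}, built from Proposition~\ref{prop : well def of X}. Since the volatility in~\eqref{syst1 signa EFM} is given by the same functional with $W$ in place of $X$, we can define analogous stopping times $\sigma_n:=\inf\{t: |\langle \bluebf{2},\widehat{\mathbb Y}^{\lambda}_{0,t}\rangle| = n\}$ under $\PP$. By Lemma~\ref{lemme espérance majo EFM} (with $X=W$), together with Lemma~\ref{lem:projection} and the moment assumptions on $\bm\Xi$, the process $\Sigma$ is bounded in a suitable integrated sense up to $\sigma_n$. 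We use the estimate in the form $\EE[\int_0^{\sigma_n\wedge T}\Sigma_s^2ds]<\infty$, and more precisely we need enough integrability to verify a localised Novikov or Kazamaki condition. If a direct Novikov bound is not available, we instead stop additionally at $\inf\{t:\int_0^t\Sigma_s^2ds\ge m\}$ and then let $m\to\infty$ using~\eqref{lien temps X}. With this localisation, $S^{\sigma_n}$ is a true martingale and $\mathbb Q_n:=S_{\sigma_n\wedge T}\cdot\PP$ defines a probability measure.

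The second step identifies the dynamics under the new measure. By Girsanov, $W^{\mathbb Q_n}_t:=W_t-\rho\int_0^{t\wedge\sigma_n}\Sigma_sds$ is a $\mathbb Q_n$-Brownian motion, and $Z$ is unaffected because it is independent of $(B,W)$, so $[Z,B]=0$. Hence, up to $\sigma_n$, $W$ solves~\eqref{eds X} driven by $W^{\mathbb Q_n}$, and $\langle\bluebf 2,\widehat{\mathbb Y}^{\lambda}_{0,\cdot}\rangle$ solves~\eqref{eds U}. Pathwise uniqueness (Proposition~\ref{prop : well def of X}) together with Yamada--Watanabe gives uniqueness in law, with $\bm\Xi$ and $\mathcal F_0$ unchanged because the density equals $1$ at time $0$. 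We therefore obtain
\[
\EE[S_T\mathbbm 1_{\{\sigma_n>T\}}]=\mathbb Q_n(\sigma_n>T)=\PP(\tau_n^U>T),
\]
where the right-hand side is computed under the original measure for the solution $U$ of~\eqref{eds U}.

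The third step passes to the limit. Letting $n\to\infty$, we have $\PP(\sigma_n>T)\to1$, since under $\PP$ the EFM signature of $(t,W_t,Z_t)$ does not explode. By monotone convergence, $\EE[S_T\mathbbm 1_{\{\sigma_n>T\}}]\to\EE[S_T]$, while $\PP(\tau_n^U>T)\to\PP(\tau^U_\infty>T)$. Hence
\[
\EE[S_T]=1-\PP(\tau_\infty^U\le T),
\]
which gives~\eqref{eq : link S x}.

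The main obstacle is the first step: making the stopped exponential a genuine martingale. The volatility $\Sigma$ is controlled by powers of $U$ only in an integrated, expected sense through Lemma~\ref{lemme espérance majo EFM}, and the random coefficients $\bm\ell_t$ depend on $\bm\Xi$. The cleanest route is to condition on $\mathcal F_0$ and then stop at levels of $\int_0^t\Sigma_s^2ds$, which makes Novikov trivial. The identity $\tau^X_\infty=\tau^U_\infty$ characterised in~\eqref{lien temps X} then shows that these stopping times also exhaust $[0,\tau^U_\infty)$.
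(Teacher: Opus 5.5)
Your final route is exactly the paper's proof: localise at the levels of $\int_0^t\Sigma_s^2\,ds=\int_0^t\langle\bm\ell_s,\widehat{\mathbb Y}^{\lambda}_{0,s}\rangle^2\,ds$ so that Novikov is trivial, apply Girsanov with $\mathcal F_0$ (hence $\bm\Xi$ and $\bm\ell$) unchanged, transfer to the solution of~\eqref{eds X} by uniqueness, and identify the limit of these stopping times with $\tau^U_\infty$ via~\eqref{lien temps X}. Make that the actual argument rather than a fallback, and run Steps 2--3 with these stopping times, which exhaust $[0,\tau^U_\infty)$ on the $X$ side by~\eqref{lien temps X}; your primary claim that $S^{\sigma_n}$ is a true martingale is not justified by the integrated polynomial moment bounds you cite, since these never give Novikov or Kazamaki ($\bm\Xi$ has only polynomial moments, and for $d>0$ the $Z$-components of $\Sigma$ are not pathwise controlled by $U$ up to $\sigma_n$).
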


\begin{proof}
    The proof follows the lines of that of~\cite[Theorem 4.2]{abijaber2025martingale}, localising $S$ with the sequence of stopping times $\tau_{n} = \inf\left\{t\in[0,T] : \int_0^t \braket{\bm{\ell}_{s}, \widehat{\mathbb{Y}}^{\lambda}_{0,s}}^2 ds=n\right\}$, $n \geq 1$, and using the fact that $\bm\ell_{\cdot}$ is $\mathcal F_0$-measurable. This yields the equivalence between the martingality of \( S \) and the non-explosion of $X$. The latter is equivalent to the non-explosion of \( U \) by Proposition~\ref{prop : well def of X}.
\end{proof}

\section{Proof of Theorem~\ref{th : crit EFM 1d}}\label{sec : efm proofs}

In this section, we provide the proof of the main result stated in Section~\ref{subsec : model efm}. The proof is largely inspired by~\cite[Sections 6--7]{abijaber2025martingale}. We first establish the sufficiency of the conditions ensuring martingality. We then prove the necessity, first in the case \( d= 0 \) and then for general \( d \in \mathbb{N} \).

As already mentioned in the proof of Proposition~\ref{prop : well def of X}, the case \( \rho = 0 \) is trivial, the price process always being a true martingale. If $N=1$, the process $U$ solves an SDE whose coefficients are globally Lipschitz and of at most linear growth. Hence, it admits a unique strong solution on $[0,T]$, and $\tau^U_\infty=\infty$ almost surely. We therefore conclude that $S$ is a martingale in both of these cases. From now on, we consider only the case \( \rho \neq 0 \) and \( N > 1 \).

\subsection{Sufficiency in Theorem~\ref{th : crit EFM 1d}}\label{subsec : proof sufficient}

We assume that $N$ is odd and greater than or equal to $3$, and that $\rho\bm{\sigma}^{\bluebf{2}^{\otimes N}}<0$. By Theorem~\ref{th : link S X}, it suffices to show that the solution $U$ of~\eqref{eds U} is almost surely non-explosive. With the notation $\tau_n^U$ introduced above, we will prove that
$\lim_{n\to\infty}\PP(\tau_n^U<T)=0.$
Since
$n^{2N}\mathbbm{1}_{\{\tau_n^U<T\}}
\leq U_{T\wedge\tau_n^U}^{2N}$,
it follows that
    $$\PP(\tau_n^U<T)\leq \frac{1}{n^{2N}}\EE[U_{T\wedge\tau_n^U}^{2N}].$$

Applying Itô's formula, integrating from $0$ to $T\wedge \tau^U_n$ and taking expectations, we obtain
    $$\EE[U_{T\wedge \tau^U_n}^{2N}]
    =\EE\bigg[\int_0^{T\wedge \tau^U_n}\bigg(\underbrace{-2N\lambda_2U_t^{2N}}_{\leq0}+2NU_t^{2N-1}\rho g(t)\braket{\bm{\sigma},{\bf D}^{\lambda}_t\bm\Xi\otimes\widehat{\mathbb{Y}}^{X,\lambda}_{0,t}}+N(2N-1)U_t^{2N-2}\bigg)dt\bigg].$$

Introducing the notation $\tilde U_t=\braket{\bluebf{2},{\bf D}_t^{\lambda}\bm\Xi\otimes\widehat{\mathbb Y}^{X,\lambda}_{0,t}}$ and $\bm{\hat\sigma}=\bm\sigma-\bm\sigma^{\bluebf{2}^{\otimes N}}\bluebf{2}^{\otimes N}$,
    we obtain
    \begin{align*}
        \EE\left[U_{T\wedge \tau_n^U}^{2N}\right]
        &\leq \EE\left[\int_0^{T\wedge \tau^U_n}\left(2NU_t^{2N-1}\rho g(t)\braket{\bm{\hat\sigma},{\bf D}^{\lambda}_t\bm\Xi\otimes\widehat{\mathbb{Y}}^{X,\lambda}_{0,t}}+\frac{2N}{N!}g(t)\rho\bm\sigma^{\bluebf{2}^{\otimes N}}U_t^{2N-1}\tilde{U}_t^N+N(2N-1)U_t^{2N-2}\right)dt\right].
    \end{align*}
    Since $\tilde U_t=U_t+e^{-\lambda_2t}\bm\Xi^{\bluebf 2}$ by definition of the tensor product,
    we obtain
        $$U_t^{2N-1}\tilde U_t^N
        =U_t^{3N-1}+\sum_{k=0}^{N-1}\binom{N}{k}U_t^{2N+k-1}(e^{-\lambda_2t}\bm\Xi^{\bluebf 2})^{N-k}.$$
    Applying Young's inequality, we get
        $$
        \EE\left[\int_0^{T\wedge\tau_n^U}|U_t^{2N-1}\tilde U_t^N-U_t^{3N-1}|dt\right]
        \leq
        N\EE\left[\int_0^{T\wedge\tau_n^U}\delta|U_t|^{3N-\frac32}\,dt\right]+C_\delta(\bm\Xi),
        $$
        $$\EE\left[\int_0^{T\wedge\tau_n^U}2NU_t^{2N-1}\rho g(t)\braket{\bm{\hat\sigma},{\bf D}_t^{\lambda}\bm\Xi\otimes\widehat{\mathbb Y}^{X,\lambda}_{0,t}}dt\right]
        \leq \EE\left[\int_0^{T\wedge\tau_n^U}\left(\delta U_t^{3N-1}+C_{\rho, N,\delta, g^*}\left|\braket{\bm{\hat\sigma},{\bf D}_t^{\lambda}\bm\Xi\otimes\widehat{\mathbb Y}^{X,\lambda}_{0,t}}\right|^{\frac{3N-1}{N}}\right)dt\right],$$
    where $\delta>0$ is arbitrary.
    By Lemma~\ref{lemme espérance majo EFM}, we obtain
        $$\EE\left[\int_0^{T\wedge\tau_n^U}\left|\braket{\bm{\hat\sigma},{\bf D}_t^{\lambda}\bm\Xi\otimes\widehat{\mathbb Y}^{X,\lambda}_{0,t}}\right|^{\frac{3N-1}{N}}dt\right]
        \leq
         C_{\varepsilon,\alpha}(\bm\Xi)\left(1+\EE\left[\int_0^{T\wedge\tau_n^U}\left(1+|U_t|^{\frac{3N-1}{N}(1+\alpha)(N-1+\varepsilon)}\right)dt\right]\right),$$
    where $\eps,\alpha >0$ are arbitrary.

    Putting everything together, we obtain
        $$\EE\left[U_{T\wedge\tau_n^U}^{2N}\right]\leq\EE\left[\int_0^{T\wedge\tau_n^U} f(U_t)\, dt\right]+C,
        $$
    with
        $f(x)=\left(\frac{2N}{N!}g_*\rho\bm\sigma^{\bluebf{2}^{\otimes N }}+(N+1)\delta\right)x^{3N-1}+C_1|x|^{\frac{3N-1}{N}(1+\alpha)(N-1+\eps)}+C_2 |x|^{2N-2} +C_3 |x|^{3N-3/2}+C_4,$
    where $C,C_1,C_2,C_3,C_4$ are positive constants depending on $\lambda$, $\rho$, $\delta$, $\alpha$, $\eps$, $\bm\sigma$, $N$, $g^*$, and $T$. Choosing $\delta\in\left(0,-\frac{2N}{(N+1)!}g_{*}\rho \bm\sigma^{\bluebf{2}^{\otimes N}}\right)$ and $\alpha,\varepsilon>0$ such that $\frac{3N-1}{N}(1+\alpha)(N-1+\varepsilon)<3N-1$ ensures that $f$ is bounded above, and hence that $\EE[U_{T\wedge\tau_n^U}^{2N}]$ is uniformly bounded in $n$, which concludes the proof.

\subsection{Necessity in Theorem~\ref{th : crit EFM 1d}: case \( d = 0 \)}\label{subsec : proof necessity 1d}
In this subsection, we show that the conditions on $\rho$, $\bm\sigma$, and $N$ in Theorem~\ref{th : crit EFM 1d} are necessary in the case \( d = 0 \). To do so, we follow the proof scheme of~\cite[Section 7.1]{abijaber2025martingale}. The following lemma is the analogue of \cite[Lemma 5.1]{abijaber2025martingale}, and its proof is omitted as it follows by the same arguments.

\begin{lemma}\label{lemma sup maj deter}
    Let $\lambda\in\RR^2_{+}$. Let $X : [0,T]\to \RR$ be a continuous semimartingale, and let $\widehat{\mathbb{X}}^{\lambda}_{0,t}$ denote the \textit{EFM} signature of $t \mapsto(t,X_t)$ over $[0,t]$. For every $\bluebf{w}\in V$, we denote by $N_2(\bluebf{w})$ the number of occurrences of \bluebf{2} in \bluebf{w}. Then, for every word $\bluebf{w}$, we have
    \begin{equation}\label{lem EFM maj déterministe}
        \sup_{t\in[0,T]}|\braket{\bluebf{w},\widehat{\mathbb{X}}^{\lambda}_{0,t}}|\leq C_{\lambda,\bluebf{w}}T^{|\bluebf{w}|-N_2(\bluebf{w})}\sup_{t\in[0,T]}|\braket{\bluebf{2},\widehat{\mathbb{X}}^{\lambda}_{0,t}}|^{N_2(\bluebf{w})}.
    \end{equation}
\end{lemma}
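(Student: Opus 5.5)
We argue by induction on the length $|\bluebf{w}|$ of the word, using the recursive Definition~\ref{def:efm-def} in the case $d=0$, where the alphabet is $\{\bluebf{1},\bluebf{2}\}$, $X^1_t=t$ and $X^2_t=X_t$. Throughout, write $M:=\sup_{t\in[0,T]}|\braket{\bluebf{2},\widehat{\mathbb{X}}^{\lambda}_{0,t}}|$. The claim is pathwise, so we fix $\omega$ and view all quantities as continuous functions of $t$. For the empty word there is nothing to prove. For $\bluebf{w}=\bluebf{2}$ the claim is trivial with $C=1$. For $\bluebf{w}=\bluebf{1}$ we have $\langle\bluebf 1,\widehat{\mathbb X}^\lambda_{0,t}\rangle=\int_0^t e^{-\lambda_1(t-u)}du\le T$, which is the claim.

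For the inductive step, write $\bluebf{w}=\bluebf{v i}$. If $\bluebf{i}=\bluebf{1}$, the definition gives
\[
\langle\bluebf{w},\widehat{\mathbb X}^\lambda_{0,t}\rangle=\int_0^t e^{-\lambda_{\bluebf w}(t-u)}\langle\bluebf{v},\widehat{\mathbb X}^\lambda_{0,u}\rangle\,du .
\]
The induction hypothesis then bounds this by $T\cdot C_{\lambda,\bluebf v}T^{|\bluebf v|-N_2(\bluebf v)}M^{N_2(\bluebf v)}$, which has the right exponents since $N_2(\bluebf w)=N_2(\bluebf v)$.

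The main obstacle is the case $\bluebf{i}=\bluebf{2}$, where the integrator is the semimartingale $X$ and no pathwise bound in terms of $\sup|X|$ is available for a stochastic integral. The key idea is to avoid integrating against $dX$ directly and instead express everything through $Y_t:=\langle\bluebf 2,\widehat{\mathbb X}^\lambda_{0,t}\rangle$, using the shuffle identity (Proposition~\ref{prop shuffle signa}). Write $\bluebf{v}=\bluebf{v'}\bluebf{1}^{k}$ with $k\ge0$ maximal, where $\bluebf{v'}$ is empty or ends in $\bluebf 2$. We then prove by a secondary induction on $N_2$ that $\langle\bluebf w,\widehat{\mathbb X}^\lambda_{0,t}\rangle$ equals a finite linear combination, with coefficients depending only on $\lambda$ and $\bluebf w$, of terms of the form
\[
Y_t^{a}\int_0^t K(t,u)\,Y_u^{b}\,(\text{lower-order iterated Lebesgue integrals of powers of }Y)\,du ,
\]
plus purely polynomial terms in $Y_t$, where $K$ is a bounded kernel and the number of $du$'s plus the total $Y$-degree matches $(|\bluebf w|-N_2(\bluebf w),N_2(\bluebf w))$.

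To obtain this representation we use the Itô/Stratonovich product rule on $e^{-\lambda_{\bluebf w}(t-u)}$-weighted integrals, which in differential form reads $d\mathbb X=-\Lambda\mathbb X\,dt+\mathbb X\otimes\circ dX$. Integration by parts in Stratonovich form moves the $\circ dX$ onto the inner integrand. The Stratonovich formulation is essential here, since it introduces no quadratic-variation corrections. Concretely, we use that $\circ dY_u=-\lambda_2Y_u\,du+\circ dX_u$. Then
\[
\int_0^t e^{-\lambda_{\bluebf w}(t-u)}F_u\circ dX_u=F_tY_t-\int_0^t Y_u\circ d\big(e^{-\lambda_{\bluebf w}(t-u)}F_u\big)+\int_0^t e^{-\lambda_{\bluebf w}(t-u)}F_u\lambda_2Y_u\,du .
\]
Combining the prefactors gives the exact identity up to an explicit $e^{-(\lambda_{\bluebf w}-\lambda_2)(t-u)}$ kernel.

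Here $F=\langle\bluebf v,\widehat{\mathbb X}^\lambda_{0,\cdot}\rangle$, and $dF$ is $du$-terms plus $\langle\bluebf{v}_-,\cdot\rangle\circ dX$, where $\bluebf v_-$ is $\bluebf v$ with its last letter removed. The $\circ dX$ contributions reduce to words with fewer letters, so iterating terminates. This is precisely the algebraic rewriting in \cite[Lemma 5.1]{abijaber2025martingale}, with the exponential weights only contributing factors $e^{-c(t-u)}\le1$ (since $\lambda\ge0$) and additional $\lambda$-multiples of $du$-integrals, which are absorbed into $C_{\lambda,\bluebf w}$. Once the representation holds, each term is bounded by $T^{\#du}M^{\deg}$, and the extra $du$-terms generated by $\lambda_2Y_u\,du$ carry exactly one extra $T$ and one extra $M$. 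These terms arise only in the weighted form, so the exponents match up to the constant. We would double-check this bookkeeping carefully, as it is the delicate point.
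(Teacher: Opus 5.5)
Your base cases and the case $\bluebf w=\bluebf v\bluebf 1$ are fine. The case $\bluebf w=\bluebf v\bluebf 2$, which carries the whole lemma, does not go through as you describe it. Write $\langle\bluebf z\rangle_t$ for $\langle\bluebf z,\widehat{\mathbb X}^{\lambda}_{0,t}\rangle$.

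\textbf{Termination fails.} Your claim that after integration by parts ``the $\circ dX$ contributions reduce to words with fewer letters, so iterating terminates'' is false. When $\bluebf v$ ends in $\bluebf 2$, the term $\int_0^t Y_u\circ d\big(e^{-\lambda_{\bluebf w}(t-u)}F_u\big)$ contains $\int_0^t e^{-\lambda_{\bluebf w}(t-u)}Y_u\langle\bluebf v_-\rangle_u\circ dX_u=\langle(\bluebf v_-\shuffle\bluebf 2)\bluebf 2\rangle_t$. This is a combination of words of the same length and the same $N_2$ as $\bluebf w$, and $\bluebf w$ itself occurs in it. Already for $\bluebf w=\bluebf{22}$ your identity reads $\langle\bluebf{22}\rangle=Y^2-\langle\bluebf{22}\rangle$. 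For $\bluebf w=\bluebf{122}$ it reads $\langle\bluebf{122}\rangle=\langle\bluebf{12}\rangle Y-\langle\bluebf{212}\rangle-\langle\bluebf{122}\rangle$. Integrating by parts once more just regenerates the same stochastic integral. So the iteration cycles unless you solve the resulting linear relation for $\langle\bluebf w\rangle$. A secondary induction on $N_2$ cannot help, since $N_2$ never decreases; what decreases is the number of trailing $\bluebf 2$'s of $\bluebf w$, not anything about trailing $\bluebf 1$'s of $\bluebf v$.

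\textbf{The $\lambda$-terms are mishandled.} A term such as $\lambda_2\int_0^t e^{-\lambda_{\bluebf w}(t-u)}F_uY_u\,du$ is only bounded by $\lambda_2C\,T^{|\bluebf w|-N_2(\bluebf w)+1}M^{N_2(\bluebf w)}$. That is one power of $T$ too many, so it can be absorbed only into a $T$-dependent constant, not into $C_{\lambda,\bluebf w}$. In fact these terms cancel identically. Their coefficients are $-\lambda_{\bluebf w}$ (from differentiating the kernel), $+\lambda_{\bluebf v}$ (from the drift of $F$) and $+\lambda_2$ (from $\circ dX=\circ dY+\lambda_2Y\,du$). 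These sum to zero because $\lambda_{\bluebf w}=\lambda_{\bluebf v}+\lambda_{\bluebf 2}$.

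\textbf{The fix.} Both problems disappear if you bypass stochastic integration entirely and use the group-like property (Proposition~\ref{prop shuffle signa}). This is the natural mechanism behind the classical argument the paper defers to. Write $\bluebf w=\bluebf{v''}\bluebf 2^{m+1}$ with $m\ge 0$ and $\bluebf{v''}$ empty or ending in $\bluebf 1$.
If $\bluebf{v''}$ is empty, then $\langle\bluebf 2^{m+1}\rangle=Y^{m+1}/(m+1)!$.
Otherwise the recursive definition of $\shuffle$ gives $(\bluebf{v''}\bluebf 2^{m})\shuffle\bluebf 2=(m+1)\,\bluebf w+\sum_{\bluebf u}\bluebf u\bluebf 2^{m}$. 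Here each $\bluebf u$ ends in $\bluebf 1$, with $|\bluebf u|=|\bluebf{v''}|+1$ and $N_2(\bluebf u)=N_2(\bluebf{v''})+1$. Hence
\[
\langle\bluebf w\rangle=\frac{1}{m+1}\Big(\langle\bluebf{v''}\bluebf 2^{m}\rangle\,Y-\sum_{\bluebf u}\langle\bluebf u\bluebf 2^{m}\rangle\Big).
\]
The first term is handled by induction on length. The others have the same length and $N_2$ as $\bluebf w$ but only $m$ trailing $\bluebf 2$'s. Words with no trailing $\bluebf 2$ are exactly your $\bluebf v\bluebf 1$ case, where the kernel satisfies $e^{-\lambda_{\bluebf w}(t-s)}\le 1$. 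A lexicographic induction on (length, number of trailing $\bluebf 2$'s) then gives the bound with exactly the exponents $|\bluebf w|-N_2(\bluebf w)$ and $N_2(\bluebf w)$. The constant is independent of $T$, and even of $\lambda$.
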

By Theorem~\ref{th : link S X}, it suffices to prove that $U$ explodes before \( T \) with positive probability. Recall that $U_t=\braket{\bluebf{2},\widehat{\mathbb{X}}^{\lambda}_{0,t}}$ satisfies
$$
dU_t=\left(-\lambda_2 U_t+\rho\braket{\bm\ell_t,\widehat{\mathbb{X}}^{\lambda}_{0,t}}\right)dt+dW_t,\qquad U_0=0.
$$

Assume $\rho\bm{\sigma}^{\bluebf{2}^{\otimes N}}>0$ and \( N \) is odd. Using the notation $\tau_n^U$ introduced above, we show that $\PP(\tau_\infty^U<T)>0$.

By Lemma~\ref{lemma sup maj deter}, we have 
\[
\sup_{s\in[0,t]}\left|\braket{\bm\ell_{s}-g(s)\bm{\sigma}^{\bluebf{2}^{\otimes N}}\bluebf{2}^{\otimes N}, \widehat{\mathbb{X}}^{\lambda}_{0,s}}\right|\leq C_{\bm{\sigma}, T}\sup_{s \in [0, t]}\lVert\bm\ell_{s}\rVert_1\left(1+\sup_{s\in[0,t]}|U_s|^{N-1}\right)\quad\textit{a.s.},
\]
where we denote $\lVert\bm\ell_{t}\rVert_1 := \sum_{|\bluebf{v}| \leq N}|\bm\ell_t^{\bluebf{v}}|$. For $R > 0$, define the event 
\begin{equation}\label{eq:event:B}
    \mathcal B_R=\bigg\{\sup_{s \in [0, T]}\lVert\bm\ell_{s}\rVert_1\leq R\bigg\},
\end{equation}
with $R$ large enough so that $\mathbb{P}(\mathcal{B}_R) > 0$ (recall that \( \bm\ell \) is given by \eqref{eq:ell_def}). On the event $\mathcal{B}_{R}$, we have 
\begin{equation}\label{eq : majo2}
\frac{\bm{\ell}^{\bluebf{2}^{\otimes N}}_{t}}{N!}U_t^N- C_{\bm\sigma,T, R}\left(1+\sup_{s\in[0,t]}|U_s|^{N-1}\right)\leq \braket{{\bm\ell}_{t},\widehat{\mathbb{X}}^{\lambda}_{0,t}} \leq\frac{\bm{\ell}^{\bluebf{2}^{\otimes N}}_{t}}{N!}U_t^N+C_{\bm\sigma,T,R}\left(1+\sup_{s\in[0,t]}|U_s|^{N-1}\right).
\end{equation}

For $\alpha\in\mathbb R$, define $\mathcal A_\alpha := \left\{\sup_{t \in [0, T]}\left| W_t - \alpha t \right| \leq 1\right\}$. We have $\PP(\mathcal A_\alpha)>0$. We show that, for $\alpha$ large enough, $\mathcal A_\alpha\cap\mathcal B_R\subset\{\tau_\infty^U<T\}$.

Since $\mathcal B_R\in\mathcal F_0$ and $\mathcal A_\alpha$ is independent of $\mathcal F_0$, the two events are independent and 
$$
\PP(\mathcal A_\alpha\cap\mathcal B_R)=
\PP(\mathcal A_\alpha)\PP(\mathcal B_R)>0.
$$

\textbf{Step 1: Show that $\PP(\tau_1^U<T)>0$.}
We first explain how choosing $\alpha$ large enough ensures that $\tau_1^U<T$ on the event $\mathcal{A}_\alpha\cap\mathcal{B}_R$. By~\eqref{eq : majo2}, on $\mathcal{A}_\alpha\cap\mathcal{B}_R$ and for all $t\in[0,\tau_1^U\wedge T]$,
\begin{align*}
    U_t&\geq W_t +\int_0^t \left(\frac{\rho\bm{\sigma}^{\bluebf{2}^{\otimes N}}}{N!}g(s)U_s^N-\lambda_2U_s \right)ds -|\rho|C_{\bm\sigma,T, R}\int_0^t \left(1+\sup_{u\in[0,s]} |U_u|^{N-1}\right)ds\\
    &\geq -1+(\alpha-\lambda_2-2|\rho|C_{\bm\sigma,T, R})t +\frac{\rho\bm{\sigma}^{\bluebf{2}^{\otimes N}}}{N!}\int_0^tg(s)U_s^Nds.
\end{align*}
By the comparison theorem for ODEs, $U_t\geq Y_0(t)$ for all $t\in[0,\tau_1^U\wedge T]$, where $Y_0$ is the solution to
    $$Y_0'(t)=F_0(t, Y_0(t)), \quad Y_0(0)=-1,$$
with $F_0(t, x)=\alpha-\lambda_2-2|\rho|C_{\bm\sigma,T, R}+\frac{\rho\bm{\sigma}^{\bluebf{2}^{\otimes N}}}{N!}g(t)x^N.$
The derivative at $0$ is given by $Y_0'(0)=F_0(0, -1)$.
Choose $\alpha$ large enough so that $\inf_{(t, x) \in [0, T] \times [-1, \infty)}F_0(t, x) =: \tilde{C} = \tilde C({\alpha,\lambda, \rho, \bm\sigma,T, R, g}) >0$. Hence, $F_0(t, Y_0(t)) = Y_0'(t) \geq \tilde{C} >0$ for all $t\in[0,\tau_1^U\wedge T]$, which implies that
    $$Y_0(t)\geq -1+\tilde{C}t, \quad t\in[0,\tau_1^U\wedge T].$$
If $T<\tau_1^U$, we obtain the estimate
$T\leq 2\tilde{C}^{-1}$.
Since $\tilde{C}$ tends to $+\infty$ as $\alpha$ tends to $+\infty$, we may choose $\alpha$ large enough so that this inequality fails. Hence, on the event $\mathcal{A}_\alpha\cap\mathcal{B}_R$,
\begin{equation}\label{eq tau_1}
    \tau_1^U\leq\tau_1^Y\leq T\wedge \frac{2}{\tilde{C}},
    \quad
    U_{\tau_1^U}=1.
\end{equation}

\textbf{Step 2: Show that $\PP(\tau_\infty^U<T)>0$.} We choose $\alpha$ large enough so that
\begin{align}
    &b_{k} := \alpha -\lambda_2(k+1)-|\rho|C_{\bm\sigma,T, R}(1+(k+1)^{N-1})+\frac{\rho\bm{\sigma}^{\bluebf{2}^{\otimes N}}}{N!}\bar g_{k}(k-2)^N >0, \quad k\geq 1,\label{eq necessite 1d nbres dérivées} \\
    &\frac{2}{\tilde{C}}
    +\sum_{k\geq 1}\frac{3}{b_{k}}
    <T,
    \label{eq som tau inf T}
\end{align}
where $\bar g_{k} = g^*$ if $k-2 < 0$, and $\bar g_{k} = g_{*}$ otherwise.

The existence of such an $\alpha$ is ensured by the assumption
$\rho\bm{\sigma}^{\bluebf{2}^{\otimes N}} >0$.
We now show that, on the event $\mathcal{A}_\alpha\cap\mathcal{B}_R$, $\tau_\infty^U$ is bounded by the left-hand side of~\eqref{eq som tau inf T}. More precisely, we prove that, for every $n\geq 1$,
    $$\tau_n^U-\tau_1^U\leq\sum_{1\leq k\leq n-1}\frac{3}{b_{k}},\quad U_{\tau_n^U}=n.$$
The base case $n=1$ was established in Step~1. Let $n\geq 1$ and assume that the result holds up to rank $n$. Using~\eqref{eq tau_1} and~\eqref{eq som tau inf T}, we obtain $\tau_n^U<T$. On the event $\mathcal{A}_\alpha\cap\mathcal{B}_R$, for all $t\in[\tau_n^U,\tau_{n+1}^U\wedge T]$, by~\eqref{eq : majo2},
\begin{align*}
    U_t&\geq U_{\tau_n^U}+W_t-W_{\tau_n^U}+\int_{\tau_n^U}^t \left(\frac{\rho\bm{\sigma}^{\bluebf{2}^{\otimes N}}}{N!} g(s)U_s^N-\lambda_2U_s\right) ds -|\rho|C_{\bm\sigma,T, R}\int_{\tau_{n}^U}^t \left(1+\sup_{u\in[0,s]} |U_u|^{N-1}\right)ds\\
    &\geq n-2+(\alpha-\lambda_2(n+1)-|\rho|C_{\bm\sigma,T, R}(1+(n+1)^{N-1}))(t-\tau_n^U) + \frac{\rho\bm{\sigma}^{\bluebf{2}^{\otimes N}}}{N!}\int_{\tau_n^U}^t  g(s)U_s^N ds.
\end{align*}
By the comparison theorem for ODEs, $U_t\geq Y_n(t)$ for all $t\in[\tau_n^U,\tau_{n+1}^U\wedge T]$, where $Y_n$ is the solution to
    $$Y_n'(t)=F_n(t, Y_n(t)), \quad Y_n(\tau_n^U)=n-2,$$
with $F_n(t, x)=\alpha-\lambda_2(n+1)-|\rho|C_{\bm\sigma,T, R}(1+(n+1)^{N-1})+\frac{\rho\bm{\sigma}^{\bluebf{2}^{\otimes N}}}{N!}g(t)x^N$. 

Notice that $\inf_{(t, x) \in [\tau_n^U,\tau_{n+1}^U\wedge T]\times[n-2, +\infty)}F_{n}(t, x) \geq b_{n} > 0$ by~\eqref{eq necessite 1d nbres dérivées}. An analysis similar to that in Step~1 shows that $Y_n(t)\geq n-2+b_{n}(t-\tau_n^U)$ for all $t\in[\tau_n^U,\tau_{n+1}^U\wedge T]$. 

If $T<\tau_{n+1}^U$, the above bound contradicts the induction hypothesis in view of~\eqref{eq som tau inf T}. Hence, $\tau_{n+1}^U\leq T$ and
    $$\tau_{n+1}^U-\tau_n^U
    \leq \tau_{n+1}^Y-\tau_n^U
    \leq
    \frac{3}{b_{n}},$$
with $U_{\tau_{n+1}^U}=n+1$. The desired bound on $\tau_{n+1}^U-\tau_1^U$ then follows from the induction hypothesis.

By~\eqref{eq som tau inf T} and the induction result established above, we have shown that
$\mathcal{A}_\alpha\cap\mathcal{B}_R\subset\{\tau_\infty^U<T\}$,
which concludes the proof.

The case where $N$ is even is treated similarly: on $\mathcal{A}_\alpha \cap \mathcal{B}_{R}$, with $\alpha$ of the same sign as $\rho\bm{\sigma}^{\bluebf{2}^{\otimes N}}$ and $|\alpha|$ sufficiently large, $U$ blows up before time $T$, to $+\infty$ if $\rho\bm{\sigma}^{\bluebf{2}^{\otimes N}}>0$ and to $-\infty$ if $\rho\bm{\sigma}^{\bluebf{2}^{\otimes N}}<0$.

\subsection{Necessity in Theorem~\ref{th : crit EFM 1d}: arbitrary \( d \in \mathbb{N}\)}\label{subsec : proof necessity multid}

Let us show that if $\rho\bm{\sigma}^{\bluebf{2}^{\otimes N}}>0$, then the process $U$ defined by~\eqref{eds U} blows up to $+\infty$ in finite time with positive probability. The case where $\rho\bm{\sigma}^{\bluebf{2}^{\otimes N}}<0$ and $N$ is even is treated similarly, showing that $U$ blows up to $-\infty$ with positive probability.

Our goal is to construct an event on which $U$ explodes before time $T$. 

Fix $t_0>0$, $M\in\mathbb N$, and $\varepsilon>0$ such that $N\varepsilon<1$, and define $\tau_n^U=\inf\{t\in[t_0,T]:|U_t|\geq n\}$. We further set $\tilde{\tau}_{n+1}^U=\tau_n^U+Cn^{-N}$, where $C$ is a deterministic constant, depending on $\bm\sigma$, $\rho$, $N$, $g_{*}$ and $g^*$, to be determined later. We define the event \( \mathcal{A} \) as the set on which the following inequalities hold:
\begin{align}
U_{t_0} &\geq M, \label{eq:cond1}\\
| \langle \bluebf{w},\widehat{\mathbb{Y}}^{X, \lambda}_{0, t_0} \rangle | &\leq 1,
\quad \text{for all $\bluebf{w}$, $|\bluebf{w}| \leq N$ containing at least one of the letters $\bluebf{3}, \ldots, \bluebf{d + 2}$;} \label{eq:cond2}\\
\sup_{s \in [\tau_n^U, \tau_{n+1}^{U} \land\tilde\tau_{n+1}^U]} | \langle \bluebf{w} , \widehat{\mathbb{Y}}^{X, \lambda}_{0, s} \rangle | &\leq n^{N_{\bluebf{2}}(\bluebf{w})+\epsilon |\bluebf{w}|},
\quad \text{for all $n \geq M$ and all words $\bluebf{w}$ with $|\bluebf{w}| \leq N$;} \label{eq:cond3}\\
\sup_{s \in [\tau_n^U, \tau_{n+1}^U\land\tilde\tau_{n+1}^U]} |W_s - W_{\tau_n^U}| &\leq n^{-N/2+\epsilon}, \quad \text{for all $n \geq M$.} \label{eq:cond4}
\end{align}
This event coincides with the event introduced in Section~7.2 of \cite{abijaber2025martingale}, with the classical signatures replaced by the corresponding EFM signatures and \( X \) replaced by \( U \). 

Consider the event \(\mathcal{B}_R \) defined by \eqref{eq:event:B}. We first prove that \( U \) explodes before \( T \) on $\mathcal{A} \cap \mathcal{B}_{R}$ and then that $\mathbb{P}(\mathcal{A} \cap \mathcal{B}_{R}) > 0$.

\textbf{Step 1: Show that $U$ explodes on $\mathcal{A} \cap \mathcal{B}_{R}$.} We show by induction that $C$ can be chosen such that, on $\mathcal{A} \cap \mathcal{B}_{R}$,
\begin{equation}\label{diff de tau n}
    \tau^U_{n}-\tau_{n-1}^U\leq C(n-1)^{-N},\quad U_{\tau_n^U}\geq n,\quad n \geq M,
\end{equation}
the initialisation being given by $\tau_{M-1}^U=\tau_M^U=t_0$.

Assume that $U_{\tau_n^U}\geq n$. Let us show that, on the event $\mathcal{A} \cap \mathcal{B}_{R}$,
$$\tau_{n+1}^U-\tau_n^U\leq Cn^{-N}, \quad U_{\tau_{n+1}^U}\geq n+1.$$
For all $t\in[\tau_n^U,\tilde{\tau}_{n+1}^U\wedge\tau_{n+1}^U]$, since $U$ solves~\eqref{eds U},
\begin{align*}
    U_t&=U_{\tau_n^U}+\int_{\tau_n^U}^t\left(-\lambda_2U_s+\rho \braket{\bm\ell_{s},\widehat{\mathbb{Y}}^{X,\lambda}_{0,s}}\right)\, ds+W_t-W_{\tau_n^U}\\
    &=U_{\tau_n^U}+\int_{\tau_n^U}^t\frac{\rho\bm{\sigma}^{\bluebf{2}^{\otimes N}}}{N!} g(s)U_s^Nds+\int_{\tau_n^U}^t\bigg(-\lambda_2U_s+\rho\sum_{\substack{|\bluebf{w}|\leq N\\ \bluebf{w}\neq\bluebf{2}^{\otimes N}}}\bm\ell_s^{\bluebf{w}}\braket{\bluebf{w},\widehat{\mathbb{Y}}^{X,\lambda}_{0,s}}\bigg)\,ds+W_t-W_{\tau_n^U}\\
    &\geq n-((|\rho|\sup_{s \in [0, t]}\|\bm\ell_{s}\|_{1} + \lambda_{2})Cn^{N\eps-1}+n^{-N/2+\eps})+\int_{\tau_n^U}^t \frac{\rho\bm{\sigma}^{\bluebf{2}^{\otimes N}}}{N!} g(s)U_s^N\, ds,
\end{align*}
where we used \eqref{eq:cond3}, \eqref{eq:cond4}, and $t-\tau_n^U\leq\tilde\tau_{n+1}^U-\tau_n^U=Cn^{-N}$.
For $n$ large enough, we have, on $\mathcal{A}\cap\mathcal{B}_R$,
\begin{equation}\label{eq : comparer U avec EDO}
    U_t\geq n-1 +\int_{\tau_n^U}^t \frac{\rho\bm{\sigma}^{\bluebf{2}^{\otimes N}}}{N!}g(s)U_s^N\,ds. 
\end{equation}
By the comparison theorem for ODEs,
$$Cn^{-N}\wedge(\tau_{n+1}^U-\tau_n^U)\leq \tau_{(n-1)\to(n+1)}^x,$$
where $\tau_{(n-1)\to(n+1)}^x$ denotes the time needed for the solution of the ODE
$$x'(t)=\frac{\rho\bm{\sigma}^{\bluebf{2}^{\otimes N}}}{N!}g(t)x(t)^N$$
to go from $n-1$ to $n+1$. This time is easily bounded from above by $\tilde{C}n^{-N}$, where $\tilde{C}$ is a constant depending only on $\rho$, $\bm{\sigma}$, $N$, $g_{*}$ and $g^*$.

Choosing $C>\tilde{C}$, we obtain the first statement in~\eqref{diff de tau n} at rank $n+1$.

Moreover, by the comparison theorem for ODEs,
$$U_t\geq x(t), \quad t\in[\tau_n^U,\tau_{n+1}^U],$$
where $x$ is the solution to the above ODE starting from $n-1$ at time $\tau_n^U$. Since $x(\tau_n^U)=n-1>0$ and $x'(t)>0$ whenever $x(t)>0$,
it follows that $x(t)\geq n-1$ for all $t\in[\tau_n^U,\tau_{n+1}^U]$. Hence, $U$ remains positive on $[\tau_n^U,\tau_{n+1}^U]$, and therefore $U_{\tau_{n+1}^U}\geq n+1$ by definition of $\tau_{n+1}^U$.
This concludes the proof of~\eqref{diff de tau n}.

Finally, using the first statement in~\eqref{diff de tau n}, we obtain
$\tau_\infty^U\leq t_0+\sum_{n\geq M}Cn^{-N}.$
Since $t_0$ can be chosen arbitrarily small and $M$ arbitrarily large, we may ensure that $\tau_\infty^U<T$ on $\mathcal{A} \cap \mathcal{B}_{R}$.

\textbf{Step 2: Show that $\PP(\mathcal{A} \cap \mathcal{B}_{R})> 0$.} Since $\PP(\mathcal A\cap\mathcal B_R)=\EE[\mathbbm{1}_{\mathcal B_R}\PP(\mathcal A\mid\mathcal F_0)]$ and $\PP(\mathcal B_R)>0$, it suffices to show that $\PP(\mathcal A\mid\mathcal F_0)>0$ almost surely on $\mathcal B_R$. The proof closely follows Step~2 in Section~7.2 of \cite{abijaber2025martingale}, working conditionally on $\mathcal F_0$ with constants that are uniform on $\mathcal B_R$, and is therefore omitted. We only point out the modifications required in the EFM setting. For every word $\bluebf{w}$ and every letter $\bluebf{i}$, the EFM signature coordinates satisfy
\[
    \braket{\bluebf{wi},\widehat{\mathbb{Y}}^{X,\lambda}_{0,t}}
    =e^{-\lambda_{\bluebf{wi}}(t-\tau_n^U)}
    \braket{\bluebf{wi},\widehat{\mathbb{Y}}^{X,\lambda}_{0,\tau_n^U}}
    +\int_{\tau_n^U}^t
    e^{-\lambda_{\bluebf{wi}}(t-s)}
    \braket{\bluebf{w},\widehat{\mathbb{Y}}^{X,\lambda}_{0,s}}
    \circ d \widehat{Y}_s^{i}, \quad t\in[\tau_n^U,\tau_{n+1}^U\wedge \tilde{\tau}_{n+1}^U].
    \]
Hence, the exponential inequality of~\cite[Exercise 3.16, Chapter IV]{revuzyor1999} must be adapted as follows. Its proof is a slight variation of that of the original inequality and is omitted.
\begin{lemma}\label{lemme : bernstein adapt}
    Let $W$ be a one-dimensional Brownian motion defined on a filtered probability space $(\Omega,\Ff,(\Ff_t)_{t\geq 0},\PP)$, let $\tau_1,\tau_2$ be stopping times, let $\lambda,h>0$ be deterministic constants, and let $\gamma$ be a progressively measurable process. Assume that $\tau_1\leq \tau_2\leq \tau_1+h$. Then, for all $\alpha,\eta>0$,
    \begin{equation}\label{eq : exp ineq}
        \PP\left(\sup_{t\in[\tau_1, \tau_2]}\left|\int_{\tau_1}^t e^{-\lambda(t-s)}\gamma_sdW_s\right|\geq \alpha, \sup_{s\in[\tau_1, \tau_2]}|\gamma_s|\leq \eta\right)\leq 2\exp\left(-\frac{\alpha^2}{2he^{2\lambda h}\eta^2}\right).
    \end{equation}
\end{lemma}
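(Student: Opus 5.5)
The plan is to follow the classical proof of the Bernstein-type inequality: an exponential supermartingale combined with Doob's maximal inequality, after first removing the factor $e^{-\lambda t}$, which is the only obstruction to the process being a martingale in $t$. For $t\in[\tau_1,\tau_2]$ write
\[
\int_{\tau_1}^t e^{-\lambda(t-s)}\gamma_s\,dW_s = e^{-\lambda t}M_t,\qquad M_t:=\int_{\tau_1}^{t} e^{\lambda s}\gamma_s\mathbbm{1}_{\{s\le \tau_2\}}\mathbbm{1}_{\{\sup_{u\in[\tau_1,s]}|\gamma_u|\le\eta\}}\,dW_s .
\]
The indicator is left-continuous and adapted, so it is predictable. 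On the event $\{\sup_{[\tau_1,\tau_2]}|\gamma|\le\eta\}$ the truncated integral coincides with the original one on $[\tau_1,\tau_2]$. Since $e^{-\lambda t}\le e^{-\lambda\tau_1}$ for $t\ge\tau_1$, the event of interest is contained in
\[
\Big\{\sup_{t\in[\tau_1,\tau_1+h]}|e^{-\lambda\tau_1}M_t|\ge\alpha\Big\}.
\]

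Next, set $N_t:=e^{-\lambda\tau_1}M_{t\vee\tau_1}$, with $N=0$ before $\tau_1$. Because $e^{-\lambda\tau_1}$ is $\mathcal F_{\tau_1}$-measurable, $N$ is a continuous local martingale. Its quadratic variation satisfies
\[
\langle N\rangle_{\tau_1+h}\le \int_{\tau_1}^{\tau_1+h}e^{2\lambda(s-\tau_1)}\eta^2\,ds\le h e^{2\lambda h}\eta^2 .
\]
This crude bound is exactly the constant appearing in \eqref{eq : exp ineq}. The standard argument then applies. For $\theta>0$, the process $\exp(\theta N_t-\tfrac{\theta^2}{2}\langle N\rangle_t)$ is a positive supermartingale with expectation at most $1$; in fact it is a true martingale, since $\langle N\rangle$ is bounded. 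Stop it at $\tau_1+h$ and apply Doob's maximal inequality:
\[
\PP\Big(\sup_{t\le\tau_1+h}N_t\ge\alpha\Big)\le \exp\Big(-\theta\alpha+\tfrac{\theta^2}{2}he^{2\lambda h}\eta^2\Big).
\]
Optimising with $\theta=\alpha/(he^{2\lambda h}\eta^2)$ gives $\exp(-\alpha^2/(2he^{2\lambda h}\eta^2))$. Applying the same bound to $-N$ gives the factor $2$.

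The only delicate step is the reduction in the first paragraph. The supremum of $|e^{-\lambda t}M_t|$ has to be dominated by $e^{-\lambda\tau_1}\sup|M_t|$ while still leaving a martingale whose quadratic variation is bounded deterministically. Pulling out $e^{-\lambda\tau_1}$, which is random but $\mathcal F_{\tau_1}$-measurable, and renormalising the integrand to $e^{\lambda(s-\tau_1)}\gamma_s$ handles this. It also ensures that the unbounded random time $\tau_1$ and the localisation via the predictable indicator cause no integrability issues. Everything else is the textbook proof of \cite[Exercise 3.16, Chapter IV]{revuzyor1999}.
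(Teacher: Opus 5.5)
Your proof is correct and is essentially the ``slight variation'' of the Revuz--Yor exponential inequality that the paper omits: pulling out $e^{-\lambda(t-\tau_1)}\le 1$ leaves the martingale $\int_{\tau_1}^{t}e^{\lambda(s-\tau_1)}\gamma_s\,dW_s$, whose quadratic variation is at most $he^{2\lambda h}\eta^2$ (exactly the constant in the statement), and the exponential-supermartingale and Doob argument then concludes. One harmless slip: $s\mapsto\mathbbm{1}_{\{\sup_{u\in[\tau_1,s]}|\gamma_u|\le\eta\}}$ need not be left-continuous, since it already vanishes at a first time where $|\gamma|>\eta$ is attained; this does not matter, because progressive measurability is all an It\^o integral against $W$ needs, and a pointwise truncation $\gamma_s\mathbbm{1}_{\{|\gamma_s|\le\eta\}}$ would work just as well.
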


\bibliographystyle{plainnat}
\bibliography{references}

\end{document}